\documentclass[twocolumn,amsmath,amssymb,nofootinbib,superscriptaddress]{revtex4-2}
\usepackage[utf8]{inputenc}
\usepackage[margin=1in]{geometry}
\usepackage{amsmath, amsthm, amssymb,amscd, mathrsfs, amsfonts, mathtools,tikz-cd,pgfplots}
\usepackage{graphicx}
\usepackage{subfigure}
\usepackage{subcaption}
\usepackage{subcaption}
\usepackage{relsize}
\usepackage{mathtools}
\usepackage{dsfont}
\usepackage{quantikz}
\usetikzlibrary{shapes, backgrounds}

\graphicspath{ {./images/} } 

\tikzset{
  qaoaBlock/.style={
    draw,
    rounded corners=2pt,
    minimum width=1.4cm,
    minimum height=0.8cm,
    align=center,
    font=\small,
  },
  problem/.style={qaoaBlock, fill=green!30},
  mixer/.style={qaoaBlock, fill=blue!30},
}
\usepackage{ORCIDinREVTeX}
\usetikzlibrary{decorations.pathmorphing}
\usetikzlibrary{arrows.meta}
\usetikzlibrary{patterns}

\usepackage[bookmarks=true,
bookmarksnumbered=true,
breaklinks=true,
pdfstartview=FitH,
hyperfigures=false,
plainpages=false,
naturalnames=true,
colorlinks=true,
linkcolor=blue,       
citecolor=blue,       
urlcolor=blue,        
pagebackref=true,
pdfpagelabels]{hyperref}
\theoremstyle{definition}
\newtheorem{thm}{Theorem}[section]
\newtheorem{prop}[thm]{Proposition}
\newtheorem{lem}[thm]{Lemma}

\newtheorem{cor}[thm]{Corollary}
\newtheorem{rmk}[thm]{Remark}

\begin{document}

\title[Provable avoidance of barren plateaus for GM-QAOA]
{Provable avoidance of barren plateaus for the Quantum Approximate Optimization Algorithm with Grover mixers}

\author{Boris Tsvelikhovskiy} 
\orcid{0000-0003-0798-7218}
\email{borist@ucr.edu}
\affiliation{Department of Mathematics, University of California, Riverside, CA 92521, USA}

\author{Matthew Nuyten} 
\affiliation{Department of Mathematics, North Carolina State University, Raleigh, NC 27695, USA}

\author{Bojko N. Bakalov}
\orcid{0000-0003-4630-6120}
\affiliation{Department of Mathematics, North Carolina State University, Raleigh, NC 27695, USA}

\date{\today}


\begin{abstract}
We analyze the dynamical Lie algebras (DLAs) associated with the Grover-mixer variant of the Quantum Approximate Optimization Algorithm (GM-QAOA). 
When the initial state is the uniform superposition of computational basis states, we show that the corresponding DLA is isomorphic to
\(\mathfrak{su}(d) \oplus \mathfrak{u}(1)\oplus \mathfrak{u}(1)\), 
where \(d\) denotes the number of distinct values of the objective function. 
We also establish an analogous result for other choices of initial states and Grover-type mixers.
Furthermore, we prove that the DLA of GM-QAOA has the largest possible commutant among all QAOA variants initialized with the same state, corresponding physically to the maximal set of conserved quantities. 
We derive an explicit formula for the variance of the GM-QAOA loss function in terms of the objective function values, and we show that for a broad class of optimization problems, GM-QAOA with sufficiently many layers avoids barren plateaus.

\end{abstract}

\maketitle

\section{Introduction}\label{sec:I}

Variational quantum algorithms have emerged as a promising approach for tackling complex optimization problems on near-term quantum devices (see \cite{CABB}). At the heart of these methods lies the Quantum Approximate Optimization Algorithm (QAOA), which provides a systematic framework for addressing combinatorial optimization problems \cite{QAOA}. The core idea behind QAOA involves translating classical optimization problems into the quantum domain by encoding the objective function into a linear operator, the problem Hamiltonian $H_P$, so that the ground state (lowest energy state) of $H_P$ corresponds to the optimal solution of the original problem. QAOA also utilizes a mixer Hamiltonian $H_M$, which is needed for facilitating transitions between quantum states. These transitions are crucial for exploring the solution space and escaping local minima. The interplay between $H_P$ and $H_M$, governed by variational parameters, shapes the quantum evolution and guides the system towards the optimal solution.

Despite its theoretical promise, QAOA faces significant practical challenges that limit its scalability and effectiveness. One of the most critical issues is selecting optimal parameters, 
a process that uses classical optimization across a potentially vast landscape. Parameter optimization becomes increasingly difficult as the circuit depth $p$ increases, leading to computational overhead that can offset the quantum advantage \cite{ZWCHL}. A particularly concerning phenomenon is the emergence of barren plateaus, where the variance of the loss function becomes exponentially small with system size, causing gradients to almost vanish and rendering training quantum circuits ineffective \cite{mcclean2018barren,larocca2025barren}.

To address these limitations, researchers have developed more sophisticated approaches (see \cite{BKZS} for a comprehensive overview), most notably the Quantum Alternating Operator Ansatz, which is a generalization of QAOA to accommodate broader classes of mixer Hamiltonians \cite{HWORVB}. These mixer variants offer greater flexibility in navigating the solution space and potentially mitigating barren plateaus. Although the most prominent choice for the mixer Hamiltonian is the $X$-mixer, formulated as the sum of Pauli $X$ operators, this mixer often fails to exploit problem-specific structure. Alternative constructions, including the $XY$-mixer and the Grover mixer have shown promise in recent works (see \cite{BE, BM, HSCHLSP, WRDR, VDAKAL, ZZSGNHLP}). 

Of particular interest in this paper is the family of \emph{Grover mixers} \cite{BE}. 
The original Grover search algorithm, introduced in~\cite{Grover}, provides a quadratic speedup over classical exhaustive search for the unstructured search problem. 
Grover's algorithm proceeds by iteratively applying a reflection about the uniform superposition state followed by a reflection about the marked subspace, thereby amplifying the probability of measuring the target solution. 
A quantum adiabatic version of Grover's algorithm, where the initial Hamiltonian is taken to be the  projector onto the uniform superposition state, was presented in~\cite{RC, DMV} and shown to achieve the same quadratic speedup (see also Theorem~1 in~\cite{FGGN}). 
QAOA algorithms for Grover search were introduced in \cite{MTB,jiang2017near}.

The projector onto the uniform superposition state can serve as a mixer in the QAOA framework. This construction is commonly referred to as Grover-mixer QAOA (GM-QAOA) \cite{BE}. More generally, we use the term GM-QAOA to denote any variant of QAOA in which the mixer is the  projector onto a chosen initial state. This flexibility makes GM-QAOA applicable to \emph{constrained} optimization problems by taking the initial state as a uniform superposition of all feasible states
(see \cite{BE}).
 
Several works have examined the performance and limitations of GM-QAOA (see \cite{APMB,BBGGLE,GBOE1,GBOE2,XXCLSA,ZZPSQKDHP}). Numerical studies on MAX-2-SAT and MAX-3-SAT observed a nonzero reachability deficit, which means that the loss function cannot achieve the optimum unless the depth exceeds a certain threshold 
\cite{APMB,GBOE2}. On the theoretical side, rigorous lower bounds on the depth required to achieve a target approximation ratio were established in~\cite{BBGGLE}, while complementary bounds in terms of the optimality density were provided in~\cite{XXCLSA}. An alternative variant, threshold QAOA, was proposed in~\cite{GBOE1}, where the problem Hamiltonian is replaced by a threshold function; numerical evidence suggests that it can outperform GM-QAOA on some problems. 

One of the recently developed approaches to studying the performance of variational quantum algorithms, including QAOA, is through the \emph{dynamical Lie algebra} (DLA) generated by the Hermitian operators (Hamiltonians) of the algorithm under the commutator operation. A key insight of this perspective is that the dimension of the DLA is linked to structural properties of the optimization landscape, such as the possible occurrence of barren plateaus~\cite{FHCKYHSP,LJGCC,LCSMCC,RBSKMLC}. Beyond QAOA, DLAs also play an important role in quantum machine learning (QML). In supervised QML, the goal is to identify a circuit within a parametrized family, generated by a finite set of Hamiltonians, that approximates a target function and generalizes to unseen data. The associated DLAs provide rigorous criteria for when barren plateaus occur~\cite{GLCCS,LTWS,WHSU}. Moreover, DLAs offer a powerful framework for analyzing questions of classical simulability~\cite{CLG,GLCCS}.

While previous theoretical analyses of QAOA in general, and of GM-QAOA in particular, have provided valuable insights into performance guarantees and algorithmic behavior, the properties of the DLAs associated with QAOA have remained less well understood. Two recent papers \cite{ASYZ1,KLFCCZ} have studied the DLAs associated with QAOA with the standard $X$-mixer for MaxCut, but only for particular graphs such as paths, cycles, and complete graphs.

The central aim of this paper is to address this gap by carrying out an explicit analysis of the DLAs associated with GM-QAOA. By characterizing these Lie algebras completely, we identify and analyze structural features that directly influence the algorithm's performance.

\subsection{Main Results} 
Our main result concerns the dynamical Lie algebras  arising from GM-QAOAs (see Theorem~\ref{DLAMainThm}).
We prove that the DLA is isomorphic to 
\(
  \mathfrak{su}(d) \oplus \mathfrak{u}(1)\oplus \mathfrak{u}(1),
\)
where \(d\) denotes the number of distinct objective function values on the initial state, provided that \(d<2^n\). When $d=2^n$, the DLA is the full 
\(
  \mathfrak{u}(2^n).
\)

 This structural insight allows us to decompose the Hilbert space into irreducible components under the action of the DLA and precisely characterize the algorithm's expressive power. In addition, we show that the DLAs generated by GM-QAOA possess the largest possible commutant among all DLAs arising from QAOAs with the same initial state, 
 which physically corresponds to having the maximal set of conserved quantities.  
 At the same time, the GM-QAOA DLA generates the smallest associative algebra under inclusion.
 From here, we conclude that other QAOA variants are more expressive than GM-QAOA when acting on the same initial state.

Furthermore, for a broad class of optimization problems, we derive an explicit inverse-polynomial lower bound on the variance of the loss function for sufficiently deep GM-QAOA circuits (Theorem~\ref{thm:BPtheorem}). This result relates the variance to the structural properties and values of the objective function, thereby establishing that GM-QAOA avoids barren plateaus.

We demonstrate these findings on several fundamental optimization problems, including ordinary and weighted MaxCut, Boolean satisfiability (SAT), Max-$k$-VertexCover, and the Traveling Salesperson Problem (TSP).
We also performed numerical experiments for MaxCut, which showed that the variance of the loss function increases and stabilizes with the depth of the GM-QAOA circuit. Combined with our theoretical result, which applies to deep circuits, this suggests that GM-QAOA avoids barren plateaus at any depth.

By establishing conditions under which GM-QAOA avoids barren plateaus, this work provides practical guidance for quantum algorithm designers and opens new avenues for near-term quantum advantage. Our results suggest that careful mixer design can fundamentally alter the training landscape of variational quantum algorithms, potentially enabling quantum optimization approaches to scale to industrially relevant problem sizes. Furthermore, the Lie algebraic framework developed here offers a methodology for analyzing other QAOA variants, potentially accelerating the development of new variational quantum algorithms with favorable training properties.

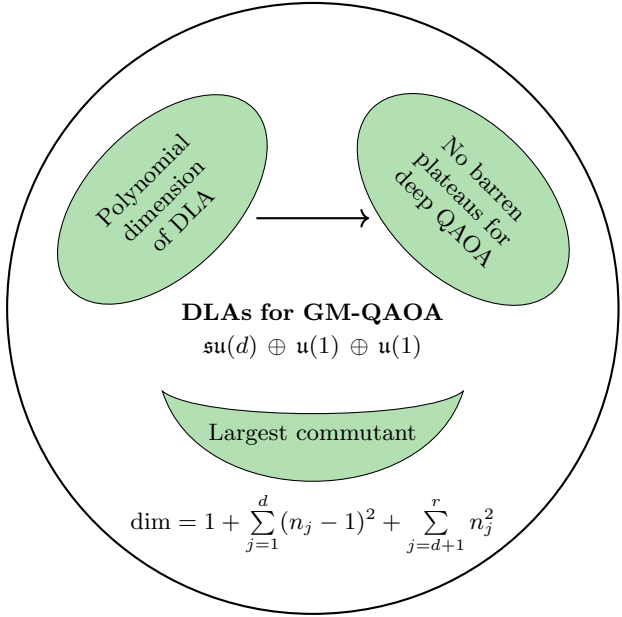
\begin{figure}[h!]
\begin{tikzpicture}

\node[align=center, text width=5cm] (above) at (0,0) 
  {\textbf{DLAs for GM-QAOA}\\[2pt]\(\mathfrak{su}(d) \oplus \mathfrak{u}(1) \oplus \mathfrak{u}(1)\)};

  \path[fill=green!60!black!30, draw] 
    (-2,-0.8) .. controls (-1.5,-1.2) and (1.5,-1.2) .. (2,-0.8)   
    .. controls (1.5,-2.4) and (-1.5,-2.4) .. (-2,-0.8) 
    -- cycle;

  \node at (0,-1.37) {Largest commutant};

\node at (0,-2.57) {$\dim= 1+\sum\limits_{j=1}^{d} (n_j - 1)^2+\sum\limits_{j=d+1}^{r} n_j^2$};
\node[align=center, text width=2cm, ellipse, draw, fill=green!60!black!30, inner sep=6pt, rotate=-45] 
(below) at (2,1.75) {No barren plateaus for deep QAOA};

\node[align=center, text width=2cm, ellipse, draw, fill=green!60!black!30, inner sep=6pt, rotate=45] 
(below) at (-2,1.75) {Polynomial dimension of DLA};

\draw[thick, ->] (-0.75,1.5) -- (0.75,1.5);

\draw[thick] (0,0.3) circle (1.6in);

\end{tikzpicture}
\caption{\textbf{Summary of the main results of the paper.} 
We determine the dynamical Lie algebras associated with Grover-mixer QAOA circuits. 
The DLA is isomorphic to 
$\mathfrak{su}(d) \oplus \mathfrak{u}(1) \oplus \mathfrak{u}(1)$, 
where $d<2^n$ is the number of distinct objective function values on the initial state $\ket\xi$.
For most combinatorial optimization problems, $d$ grows polynomially in the number of qubits $n$, implying that the DLA dimension is polynomial in $n$. 
As a consequence, the variance of the loss function admits a lower bound of order $1/\mathrm{poly}(n)$, showing that GM-QAOA avoids barren plateaus for high enough circuit depth. Finally, we prove that among all QAOA variants initialized with the same initial state, the GM-QAOA DLA admits the largest commutant, and establish an explicit formula for its dimension in terms of the multiplicities $n_1,\dots,n_r$ of the objective function values.}
\label{fig:GMQAOAavoidsBP}
\end{figure}

\subsection{Structure of the Paper}

The paper is organized as follows. Section~\ref{sec:II} gives a brief overview of 
the Quantum Approximate Optimization Algorithm (QAOA), 
and introduces, in particular, the Grover-mixer variant of QAOA (GM-QAOA) in the form used throughout this work.

The main results appear in Sections~\ref{sec:IV} and \ref{sec:V}, with proofs deferred to Appendix~\ref{sec:VIII}.
\begin{itemize}
    \item Theorem~\ref{DLAMainThm} gives a complete characterization of the dynamical Lie algebra (DLA) $\mathfrak{g}_{\xi}$ arising from GM-QAOA with a given initial state $\ket{\xi}$.
    \item Theorem~\ref{IsotypicDecomp} establishes the decomposition of the Hilbert space $W=(\mathbb{C}^2)^{\otimes n}$ into a direct sum of isotypic components under the DLA action.
    \item Theorem~\ref{ContainmentThm} shows that, for a given initial state $\ket{\xi}$, the associative algebra generated by the GM-QAOA DLA $\mathfrak{g}_{\xi}$ is the smallest (with respect to inclusion) among associative algebras generated by DLAs with other mixers whose negated Hamiltonians \( -H_M \) have \( \ket{\xi} \) as their ground state.
    As a consequence, $\mathfrak{g}_{\xi}$ has the largest commutant.
    \item Theorem~\ref{LargestCommutantThm} 
    describes the commutant of the GM-QAOA DLA $\mathfrak{g}_{\xi}$, which is defined as the set of all operators commuting with the elements of~$\mathfrak{g}_{\xi}$.
    \item Theorem~\ref{thm:ExpVar} presents explicit formulas for the expectation and variance of the loss function of GM-QAOA, taken with respect to the Haar measure on the dynamical Lie group $e^{\mathfrak{g}_{\xi}}$.
    \item Theorem~\ref{thm:BPtheorem} provides an explicit inverse-polynomial lower bound (in the number of qubits $n$) on the variance of the loss function for a broad class of optimization problems. As a consequence, barren plateaus do not arise when the circuit depth~$p$ is sufficiently large.
\end{itemize}

In Section~\ref{sec:VI}, we apply the above theoretical results to concrete optimization problems. 
\begin{itemize}
    \item In Section~\ref{sec:V.search}, we discuss the QAOA algorithms for Grover search \cite{MTB,jiang2017near}, both for the Grover mixer and the $X$-mixer.
    \item In Section~\ref{sec:V.maxcut}, we apply our results to the MaxCut and weighted MaxCut problems, deriving explicit lower bounds on the variance of the loss function. The exact expressions for these bounds are given in Eqs.~\eqref{MaxCutVarBound} and~\eqref{MaxCutVarBound2}.

    \item In Section~\ref{sec:V.numerics}, we perform state vector-based simulations
to evaluate how the expectation and variance of the loss function scale with depth and qubit number, providing evidence for our barren plateau analysis on GM-QAOA.
\item Section~\ref{sec:VI.B} provides a comparative analysis of GM-QAOA and its standard $X$-mixer counterpart in cases where the DLA of the latter is known
(cf.\ \cite{ASYZ1,KLFCCZ}). 
\item Section~\ref{sec:VI.C} considers other optimization problems, to which our results are applicable, such as 
Boolean satisfiability (SAT), Max-$k$-VertexCover, and the Traveling Salesperson Problem (TSP).
\end{itemize}
Finally, Section~\ref{sec:VII} offers concluding remarks and outlines possible directions for future research.

\section{A Review of the Quantum Approximate Optimization Algorithm}
\label{sec:II}
We provide an overview of 
QAOA in a language suitable for the purposes of this paper. For a more comprehensive treatment of the subject,  we refer the reader to \cite{HWORVB, QAOA, BBCC}.

\subsection{From Classical to Quantum}

Let $\mathbb{B}^n := \{0,1\}^n$ denote the set of all binary strings of length $n$.
A broad class of discrete optimization problems can be formulated as the task of finding elements
$x \in \mathbb{B}^n$ that minimize (or maximize) a given objective function
\begin{equation}\label{eq:optfunc}
F\colon \mathbb{B}^n \longrightarrow \mathbb{R},
\end{equation}
possibly subject to additional constraints.

To translate this problem into the quantum setting, consider the complex vector space of $n$ qubits,
$W = (\mathbb{C}^2)^{\otimes n}\cong \mathbb{C}^{2^n}$. The computational basis for $W$ is given by $\{\ket{x} \mid x \in \mathbb{B}^n\}$. 
We define a Hermitian operator $H_P$ on $W$ by
\begin{equation}\label{eq:HP}
H_P \ket{x} = F(x)\ket{x}, \qquad \forall \, x \in \mathbb{B}^n.
\end{equation}
In other words, $H_P$ is the diagonal operator whose eigenvalues encode the values of $F$.
Then the minimal value of $F$ corresponds to the lowest energy of the \emph{problem Hamiltonian} $H_P$.


\subsection{An Overview of QAOA} 
Variational Quantum Algorithms (VQAs) are a prominent class of hybrid quantum-classical algorithms designed to tackle problems in quantum chemistry, machine learning, and combinatorial optimization (see \cite{CABB} for a review).
In a VQA, a parametrized quantum circuit is optimized by a classical algorithm to minimize (or maximize) a loss function, often representing the expectation value of the problem Hamiltonian $H_P$.

One of the most influential instances of VQA is the Quantum Approximate Optimization Algorithm (QAOA), first introduced in \cite{QAOA}.  
A central component of QAOA is the so-called \emph{mixer Hamiltonian}, denoted $H_M$. The negative of this operator, $-H_M$, is required to have a unique ground state $\ket{\xi} \in W$. 
The fundamental idea behind QAOA is to gradually deform $H_M$ into $H_P$ through a series of quantum transformations, in such a way that the ground state at each stage is mapped to a ground state of the next.

The process begins with the preparation of the ground state $\ket{\xi}$ of the Hamiltonian $-H_M$, by applying a unitary operator $U_\xi$ to the state $\ket{0\cdots0} = \ket{0}^{\otimes n}$:
\begin{equation}\label{eq:Uxi}
\ket{\xi} = U_\xi \ket{0\cdots0}.
\end{equation}
Subsequently, one applies an alternating sequence of unitary operators generated by the problem Hamiltonian $H_P$ and the mixer Hamiltonian $H_M$
(see Figure~\ref{QAOAcircuit}).
Each layer consists of the action of $U_P(\gamma_j):=e^{-i \gamma_j H_P}$ followed by $U_M(\beta_j):=e^{-i \beta_j H_M}$, where 
\[
\boldsymbol{\beta}=(\beta_1,\dots,\beta_p), \qquad \boldsymbol{\gamma}=(\gamma_1,\dots,\gamma_p),
\] 
with $\beta_j \in [0,\pi)$ and $\gamma_j \in [0,2\pi)$, are collections of classically optimized parameters that control the evolution times under $H_M$ and $H_P$, respectively. Here, $p$ denotes the QAOA \emph{depth}, i.e., the number of alternating layers.


\begin{figure}
\resizebox{0.5\textwidth}{!}{\begin{tikzpicture}
\node[draw, rounded corners, fill=blue!30, minimum width=7.3cm, minimum height=4cm, inner sep=0.5cm, anchor=center, label=below:{Ansatz circuit}] (blueframe) at (-0.3,0) {};

\node[draw, rounded corners, fill=red!30, minimum width=1cm, minimum height=4cm, inner sep=0.5cm, anchor=center] (redframe) at (4.2,0) {};

\node (circ) {
\begin{quantikz} 
\lstick{\ket{0}} & \gategroup[4,steps=1,style={rounded corners,fill=black!20,draw}, label style={yshift=-0.75in}]{\scriptsize$U_\xi$} & & \gategroup[4,steps=2, style={rounded corners,fill=green!20,draw}, label style={yshift=-.75in}]{\scriptsize$U_P(\gamma_1)$} & & & \gategroup[4,steps=2,style={rounded corners,fill=yellow!20,draw}, label style={yshift=-0.75in}]{\scriptsize$U_M(\beta_1)$} & & \cdots&  \gategroup[4,steps=2, style={rounded corners,fill=green!20,draw}, label style={yshift=-0.75in} ]{\scriptsize$U_P(\gamma_p)$} & & & \gategroup[4,steps=2,style={rounded corners,fill=yellow!20,draw}, label style={yshift=-0.75in}]{\scriptsize$ U_M(\beta_p)$} &  & &\meter{} \\ 
 \lstick{\ket{0}} & & & & & & & & \cdots & & & & & & &\meter{}\\
 \lstick{\vdots} \\
 \lstick{\ket{0}} & & & & & & & &  \cdots & & & & & & & \meter{} \end{quantikz}
};
\draw[<-]  (-1.9,-2.2) -- (-1.9,-3);
\draw[-]  (-1.9,-3) -- (4.1,-3) node[midway,below,yshift=-0.1cm,black]{Update of parameters $(\boldsymbol{\beta}, \boldsymbol{\gamma})$};
\draw[-]  (4.1,-2.2) -- (4.1,-3);
\end{tikzpicture}}
\caption{\textbf{Schematic illustration of the QAOA circuit.} The initial state preparation unitary $U_\xi$ is followed by $p$ alternating applications of unitaries $U_P(\gamma_j):=e^{-i \gamma_j H_P}$ and $U_M(\beta_j):=e^{-i \beta_j H_M}$, generated by
the problem Hamiltonian $H_P$ and the mixer Hamiltonian $H_M$, respectively. At the end of the circuit, the state is measured in the computational basis. 
Each measurement outcome $x \in \mathbb{B}^n$ is assigned the value $F(x)$ of the objective function, 
and the empirical mean of these values provides an estimate of 
$\bra{\psi(\boldsymbol{\beta},\boldsymbol{\gamma})} H_P \ket{\psi(\boldsymbol{\beta},\boldsymbol{\gamma})}$. 
This estimate is then used in a classical optimization loop to update the parameters $(\boldsymbol{\beta}, \boldsymbol{\gamma})$ with the goal of minimizing the empirical mean.}
\label{QAOAcircuit}
\end{figure}
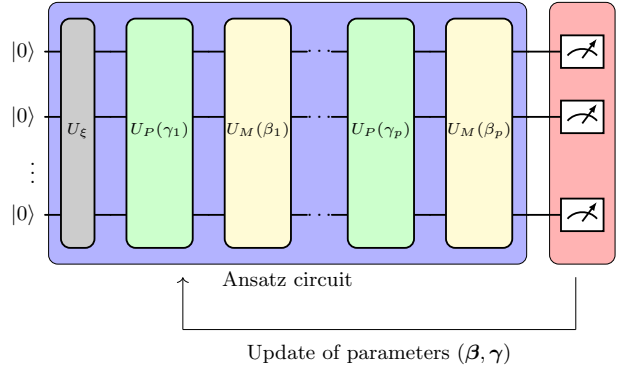

The overall transformation operator is
\begin{equation}
\begin{aligned}
U&(\boldsymbol{\beta},\boldsymbol{\gamma})
:= U_M(\beta_p) U_P(\gamma_p) \cdots U_M(\beta_1) U_P(\gamma_1) \\
&= e^{-i \beta_p H_M} e^{-i \gamma_p H_P} \cdots e^{-i \beta_1 H_M} e^{-i \gamma_1 H_P}.
\label{qaoa-chain}
\end{aligned}
\end{equation}
At the end of the protocol, the circuit prepares the state
\begin{equation}\label{eq:psibg}
   \ket{\psi(\boldsymbol{\beta},\boldsymbol{\gamma})} 
      := U(\boldsymbol{\beta},\boldsymbol{\gamma}) \ket{\xi}.
\end{equation}
This state is measured in the computational basis. 
Each measurement produces a bit string $x \in \mathbb{B}^n$ with probability 
$\Pr(x) = |\braket{x}{\psi(\boldsymbol{\beta},\boldsymbol{\gamma})}|^2$, 
and is assigned the value $F(x)$ of the original objective function. By repeating the procedure for many runs and averaging the observed values, one obtains an empirical estimate of the expectation value of the problem Hamiltonian:
\begin{equation}\label{eq:empmean}
   \langle F \rangle
      = \bra{\psi(\boldsymbol{\beta},\boldsymbol{\gamma})} H_P \ket{\psi(\boldsymbol{\beta},\boldsymbol{\gamma})}.
\end{equation}
This empirical mean serves as the \emph{loss function} that guides the classical optimization of the parameters $(\boldsymbol{\beta},\boldsymbol{\gamma})$.


Although the problem Hamiltonian $H_P$, which encodes the objective function via Eq.~\eqref{eq:HP}, is uniquely determined by the classical problem, the freedom to choose the Hamiltonian mixer $H_M$ remains. 
The most prevalent choice for the mixer Hamiltonian is the sum of Pauli $X$ operators: 
\begin{equation}\label{Xmixer}
B=\sum\limits_{j=1}^n X_j,
\end{equation}
where $n$ denotes the number of qubits used to encode the problem. Although this choice is generic and convenient, it does not exploit any structure specific to the problem at hand.

Several alternative constructions for the mixer Hamiltonian have been proposed in the literature. For example, in \cite{HWORVB}, the authors introduced the framework of the Quantum Alternating Operator Ansatz, which accommodates broader classes of mixers. Their approach is particularly useful for optimization problems that involve both hard constraints, defining a \emph{feasible subspace} of $W$, and soft constraints that should be optimized.
Furthermore, numerical simulations in \cite{GPSK} have shown that the use of mixers formed by linear combinations of Pauli $X$ and $Y$ operators can improve performance in low-depth QAOA circuits. Further examples and generalizations can be found in \cite{BFL,GBOE1,WRDR} and the references therein.

\subsection{QAOA with Grover mixer}
In the present work, we focus on the variant of QAOA that employs the 
\emph{Grover mixer}, defined as the  projector onto the initial state:
\begin{equation}\label{eq:GM}
    G_M := \ket{\xi}\bra{\xi} = U_\xi \ket{0\cdots0}\bra{0\cdots0} U_\xi^\dagger
\end{equation}
(cf.\ Eq.~\eqref{eq:Uxi}).
This variant, introduced in~\cite{BE}, is commonly referred to as the 
\emph{Grover-mixer QAOA} (GM-QAOA).  
For unconstrained optimization problems, a natural choice is to take $\ket\xi$
to be the uniform superposition of computational basis states:
\begin{equation}\label{eq:xi++}
\ket{+\cdots+} = \frac{1}{\sqrt{2^n}}\sum_{x\in\mathbb{B}^n}\ket{x}.
\end{equation}
For \emph{constrained} optimization problems, one can take
$\ket{\xi}$ to be an equal superposition of a basis of the feasible subspace.

The unitary operator $U_M(\beta) = e^{-i\beta G_M} =e^{-i\beta \ket\xi \bra\xi}$ can be implemented as 
\begin{equation}\label{eq:GM_unitary}
    U_M(\beta) 
    = U_\xi \bigl(I - (1-e^{-i\beta})\ket{0\cdots 0}\bra{0\cdots 0}\bigr) U_\xi^\dagger.
\end{equation} 
For simulation purposes, we construct $U_M(\beta)$ via a series of gate applications to a state vector simulator in PennyLane~\cite{pennylane}. 

Note that, for unconstrained optimization problems such as MaxCut, the unitary that prepares the initial state $\ket\xi = \ket {+\cdots +}$ is simply 
$U_\xi = H^{\otimes n}$, while a more careful treatment is required for constrained optimization problems. 
Crucially, $U_\xi$ must be constructible with a polynomial-depth circuit, otherwise the approach is intractable (cf.\ \cite{BE}).
For instance, the Max-$k$-Vertex Cover problem requires Dicke states $\ket\xi = \ket{D_k^n}$, which represent an equal superposition of all possible computational basis vectors of Hamming weight $k$ in an $n$ qubit system.  Dicke states can be prepared in $\mathcal{O}(nk)$ gates and $\mathcal{O}(n)$ depth~\cite{BE2}; thus, they are practical for GM-QAOA.

\section{Dynamical Lie Algebras for GM-QAOAs}\label{sec:IV}

We define the \emph{dynamical Lie algebra} (DLA) associated with a given QAOA instance as the real Lie algebra generated by the skew-Hermitian operators $iH_M$ and $iH_P$, where $H_M$ and $H_P$ denote the mixer and problem Hamiltonians, respectively. 
Equivalently, the DLA is the smallest real Lie algebra (with respect to the commutator bracket) that contains both $iH_M$ and $iH_P$. 
As we will see in the following sections, the structure of this Lie algebra provides important insight into the expressive power and inherent limitations of the QAOA ansatz.

\subsection{Hilbert Space Decomposition and Matrices of Restricted Hamiltonians}
\label{sec:III}

Let $\{\lambda_1, \dots, \lambda_{r}\}$ denote the set of distinct values in the range of the objective function $F\colon \mathbb{B}^n \to \mathbb{R}$.
The Hilbert space $W\cong \mathbb{C}^{2^n}$ of $n$ qubits then admits a natural orthogonal decomposition according to the level sets of $F$:
\begin{equation}
   W = W_{\lambda_1} \oplus W_{\lambda_2} \oplus \cdots \oplus W_{\lambda_{r}},
\label{LevelSetDecomp}
\end{equation}
where each subspace $W_{\lambda_j}$ is spanned by the computational basis states that attain the value $\lambda_j$ of the objective function, namely  
\[
\bigl\{ \ket{x} \mid x \in \mathbb{B}^n,\; F(x) = \lambda_j \bigr\}.
\]
Equivalently, $W_{\lambda_j}$ are the eigenspaces of the problem Hamiltonian $H_P$.
We let $n_j := \dim W_{\lambda_j}$. 

With respect to the decomposition~\eqref{LevelSetDecomp}, the initial state $\ket\xi$ admits an expansion
\begin{equation}
  \ket{\xi} = \sum_{j=1}^{r} c_j \ket{\xi_j},  
\label{InitStateDecomp}
\end{equation}
where $\ket{\xi_j} \in W_{\lambda_j}$ is a normalized vector with $\braket{\xi_j}{\xi_j}=1$ if $c_j \neq 0$, and is defined to be zero otherwise.  
In other words, the component $\ket{\xi_j}$ represents the normalized projection of $\ket\xi$ onto $W_{\lambda_j}$ whenever that projection is nonzero.  

For each $j$ such that $\ket{\xi_j} \neq 0$, we further refine $W_{\lambda_j}$ by splitting off the one-dimensional space spanned by $\ket{\xi_j}$:
\[
\widetilde{W}_{\lambda_j} := W_{\lambda_j} \cap \bigl(\mathbb{C}\ket{\xi_j}\bigr)^{\perp} \subset W_{\lambda_j}.
\]

Relabeling indices, we may assume that $\ket{\xi_j} \neq 0$ for $1 \le j \le d$ (with $d \le r$), and $\ket{\xi_j} = 0$ for $j > d$.  
This convention gives 
the orthogonal decompositions
\[
W_{\lambda_j} = \mathbb{C}\ket{\xi_j} \oplus \widetilde{W}_{\lambda_j}, \qquad 1 \le j \le d,
\]
where $\dim \widetilde{W}_{\lambda_j} = n_j - 1$.  
Hence, we can rewrite the decomposition~\eqref{LevelSetDecomp} in a more structured form:
\begin{equation}
   W 
    = \widetilde{W} \oplus W_0,
\label{simplifiedDecompForGmax}
\end{equation}
where
\begin{align}
\widetilde{W} &:= \bigoplus_{j=1}^{d} \widetilde{W}_{\lambda_j} \oplus\bigoplus_{j=d+1}^{r} W_{\lambda_j}, 
\label{eq:Wtilde}
\\
W_0 &:= \bigoplus_{j=1}^{d} \mathbb{C}\ket{\xi_j}.
\label{eq:W0}
\end{align}

To summarize, $W$ splits into two natural components as follows.
\begin{itemize} 
\item 
The subspace $W_0$ has an orthonormal basis \( \{ \ket{\xi_1}, \dots, \ket{\xi_d} \} \), given by the nontrivial components of $\ket\xi$ along the eigenspaces of $H_P$. In particular, \(\ket{\xi} \in W_0\).
\item 
The subspace $\widetilde{W} = (W_0)^\perp$ contains all vectors orthogonal to $\ket{\xi_1}, \dots, \ket{\xi_d}$. Note that $\widetilde{W} \subset \bigl(\mathbb{C}\ket{\xi}\bigr)^{\perp}$.
\end{itemize}

The splitting \eqref{simplifiedDecompForGmax} is particularly well suited for analyzing the action of the Grover mixer, since $G_M= \ket{\xi}\bra{\xi}$ acts as the identity on $\ket\xi$ and annihilates its orthogonal complement.
Hence, $G_M$ acts trivially on $\widetilde{W}$, while its restriction to $W_0$ is represented by the matrix
\begin{equation}\label{RestrictionsToTriv2}
G_{M, 0} :=
\begin{pmatrix}
c_1 \bar{c}_1 & c_1 \bar{c}_2 & \cdots & c_1 \bar{c}_d \\
c_2 \bar{c}_1 & c_2 \bar{c}_2 & \cdots & c_2 \bar{c}_{d} \\
\vdots & \vdots & \ddots & \vdots \\
c_{d} \bar{c}_1 & c_{d} \bar{c}_2 & \cdots & c_d \bar{c}_d
\end{pmatrix},
\end{equation}
relative to the basis \( \{ \ket{\xi_1}, \dots, \ket{\xi_d} \} \).
This follows directly from Eq.~\eqref{InitStateDecomp} and $\braket{\xi_j}{\xi} =c_j$.

On the other hand, the operator \( H_P \) is diagonal in the basis \( \{ \ket{\xi_1}, \dots, \ket{\xi_d} \} \). The corresponding matrix is
\begin{equation}\label{RestrictionsToTriv1}
H_{P, 0} :=
\begin{pmatrix}
\lambda_1 & 0 & \cdots & 0 \\
0 & \lambda_2 & \cdots & 0 \\
\vdots & \vdots & \ddots & \vdots \\
0 & 0 & \cdots & \lambda_{d}
\end{pmatrix}.
\end{equation}

%
%


\subsection{Complete Description of Dynamical Lie Algebras Corresponding to GM-QAOA}

We are now prepared to describe the DLA in the case where the mixer Hamiltonian is given by the Grover mixer $G_M= \ket{\xi}\bra{\xi}$. We define the DLA
\[
    \mathfrak{g}_{\xi} := \langle iG_M,\, iH_P \rangle_{\mathrm{Lie}}
\]
to be the real Lie algebra generated by the operators $iG_M$ and $iH_P$.

\begin{thm}
The dynamical Lie algebra of GM-QAOA, \( \mathfrak{g}_{ \xi}:=\langle iG_M,\, iH_P \rangle_{\mathrm{Lie}} \), admits the following description:
\[
\mathfrak{g}_{\xi} \;\cong\;
\begin{cases}
\mathfrak{su}(d) \oplus \mathfrak{u}(1) \oplus \mathfrak{u}(1), & \text{if }\, d < 2^n, \\
   \mathfrak{su}(d) \oplus \mathfrak{u}(1), & \text{if }\, d = 2^n,
\end{cases}
\]
where $d$ is the number of nonzero summands in the expression \eqref{InitStateDecomp} of $\ket\xi$ in terms of eigenvectors of $H_P$.
\label{DLAMainThm}
\end{thm}

\begin{rmk}
   In the case of GM--QAOA with initial state $\ket\xi = \ket{+\cdots+}$, Theorem~1 of \cite{BLSAMS} established an upper bound of $d^{2}+1$ for the dimension of $\mathfrak{g}_{ \xi}$. 
   The description obtained in the preceding theorem is consistent with this bound. 
\end{rmk}

Recall that the \emph{center} of the dynamical Lie algebra, 
\(
\mathfrak{z} \subset \mathfrak{g}_{ \xi},
\) 
is defined as the set of all elements that commute with every element of $\mathfrak{g}_{ \xi}$. Equivalently, it consists of all elements that commute with the generators $G_M$ and $H_P$. The next result explicitly describes the center of the DLA $\mathfrak{g}_{\xi}$. 

\begin{cor}
Under the assumptions of Theorem~\ref{DLAMainThm}, for $d < 2^n$, the center $\mathfrak{z}$ of the Lie algebra $\mathfrak{g}_{\xi}$ is $2$-dimensional. It is spanned over $\mathbb R$ by $iH_P \pi_{\widetilde{W}}$ and $i\pi_{W_0}$, where
\begin{equation}\label{eq:proj}
\pi_{\widetilde{W}} \colon W \to \widetilde{W}, \qquad 
\pi_{W_0} \colon W \to W_0
\end{equation}
denote the orthogonal projections corresponding to the decomposition \eqref{simplifiedDecompForGmax}.
\label{cor:center}
\end{cor}


\subsection{Decomposition of the Hilbert Space under the Action of the DLA}

Another fundamental question concerns the decomposition of the Hilbert space  \( W \) under the action of the GM-QAOA dynamical Lie algebra \( \mathfrak{g}_{ \xi} \). 

\begin{thm}
    As a representation of the Lie algebra \( \mathfrak{g}_{ \xi} \), the Hilbert space \( W = \mathbb{C}^{2^n} \) decomposes as
    \[
        W = W_{0} \oplus \mathbb{C}^{\oplus (2^n - d)},
    \]
    where \( W_{0} \) is an irreducible \( d \)-dimensional representation and the remaining summands form a direct sum of \( 1 \)-dimensional representations.
    \label{IsotypicDecomp}
\end{thm}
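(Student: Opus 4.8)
The plan is to exploit the orthogonal splitting $W = \widetilde{W}\oplus W_0$ of~\eqref{simplifiedDecompForGmax} and reduce the statement to a separate analysis of the two blocks. The first step is to verify that both generators, and hence the whole Lie algebra $\mathfrak{g}_{\mathrm{Gr},\xi}$, preserve this splitting. Since $\ket{\xi}\in W_0$ and $W_0\perp\widetilde{W}$, the Grover mixer $\mathcal{G}_M=-\ket{\xi}\bra{\xi}$ annihilates $\widetilde{W}$ and maps $W_0$ into itself. The problem Hamiltonian $H_P$ acts on each level set $W_{\lambda_j}$ as the scalar $\lambda_j$, so it preserves every line $\mathbb{C}\langle\xi_j\rangle$ and every subspace $\widetilde{W}_{\lambda_j}$ (as well as each $W_{\lambda_j}$ for $j>d$); in particular it preserves both $W_0$ and $\widetilde{W}$. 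Consequently every iterated commutator of $i\mathcal{G}_M$ and $iH_P$ preserves the splitting, so $W_0$ and $\widetilde{W}$ are $\mathfrak{g}_{\mathrm{Gr},\xi}$-invariant.

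Next I would treat the block $\widetilde{W}$. Here $i\mathcal{G}_M$ restricts to $0$, while $iH_P$ restricts to the diagonal operator equal to $i\lambda_j$ on $\widetilde{W}_{\lambda_j}$ (and on $W_{\lambda_j}$ for $j>d$). Because one generator vanishes identically on $\widetilde{W}$, every bracket containing at least one factor of $i\mathcal{G}_M$ also vanishes there, so the image of $\mathfrak{g}_{\mathrm{Gr},\xi}$ under restriction to $\widetilde{W}$ is the one-dimensional abelian algebra $\mathbb{R}\,(iH_P|_{\widetilde{W}})$. As this operator acts as a scalar on each level-set block, every one-dimensional subspace of $\widetilde{W}$ is invariant, and choosing any orthonormal basis adapted to the blocks exhibits $\widetilde{W}$ as a direct sum of one-dimensional invariant subspaces. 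The dimension count $\dim\widetilde{W}=\sum_{j=1}^{d}(n_j-1)+\sum_{j=d+1}^{r}n_j=2^n-d$ then identifies this part with $\mathbb{C}^{\oplus(2^n-d)}$.

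The main step is to establish that $W_0$ is irreducible. On $W_0\cong\mathbb{C}^d$, with orthonormal basis $\{\xi_1,\dots,\xi_d\}$, the generators restrict to the rank-one skew-Hermitian operator $-i\ket{\xi}\bra{\xi}$ and to $i\,\mathrm{diag}(\lambda_1,\dots,\lambda_d)$. Here I would invoke Theorem~\ref{DLAMainThm}, or rather the computation behind it: its proof shows that the restriction of $\mathfrak{g}_{\mathrm{Gr},\xi}$ to $W_0$ contains a copy of $\mathfrak{su}_d$ acting through its defining representation on $\mathbb{C}^d$, while the remaining $\mathfrak{u}_1$ summand(s) act as scalars and hence do not refine the invariant-subspace structure. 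Since the defining representation of $\mathfrak{su}_d$ on $\mathbb{C}^d$ is irreducible (equivalently, these operators span the full matrix algebra $\mathrm{End}(W_0)$, so by Schur's lemma no proper invariant subspace can survive), $W_0$ is an irreducible subrepresentation. Assembling the three steps gives the claimed decomposition $W=W_0\oplus\mathbb{C}^{\oplus(2^n-d)}$.

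The delicate point is exactly this last step: one needs not only the abstract isomorphism type of $\mathfrak{g}_{\mathrm{Gr},\xi}$ but the stronger fact that its $\mathfrak{su}_d$-summand acts on $W_0$ as the standard representation, rather than as some other (possibly reducible) $d$-dimensional representation. This is precisely what is produced while proving Theorem~\ref{DLAMainThm}: the distinctness of the values $\lambda_1,\dots,\lambda_d$ together with the nonvanishing of all coefficients $c_1,\dots,c_d$ furnishes the controllability condition guaranteeing that $-i\ket{\xi}\bra{\xi}$ and $i\,\mathrm{diag}(\lambda_1,\dots,\lambda_d)$ generate all traceless skew-Hermitian operators on $W_0$. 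Once this is in place, irreducibility of $W_0$ is immediate and the theorem follows.
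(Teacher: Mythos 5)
Your proposal is correct and follows essentially the same route the paper intends: the paper presents Theorem~\ref{IsotypicDecomp} as an immediate corollary of Theorem~\ref{DLAMainThm}, whose proof already establishes the invariance of the splitting $W=\widetilde{W}\oplus W_0$, the vanishing of $\mathcal{G}_M$ on $\widetilde{W}$ with $H_P$ acting diagonally there, and (via Proposition~\ref{KeyProp}) that the restriction to $W_0$ generates all of $\mathfrak{sl}(W_0)$ or $\mathfrak{gl}(W_0)$ in the defining representation, which is exactly the irreducibility input you correctly identify as the delicate point.
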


Now consider another DLA $\mathfrak{g} = \langle iH_M, iH_P \rangle_{\mathrm{Lie}}$ generated by the same problem Hamiltonian $H_P$ and a different mixer $H_M$. We want to compare its expressiveness to that of \( \mathfrak{g}_{ \xi} \), i.e., to understand how the decomposition of $W$ under the action of $\mathfrak{g}$ compares to the decomposition from Theorem \ref{IsotypicDecomp}. We start with the observation that a representation of a Lie algebra on $W$ is equivalent to a representation of the associative algebra generated by it.

Let us denote by $\operatorname{End}(W)$ the set of all $\mathbb C$-linear operators on $W$.
Note that $\operatorname{End}(W)$ is a vector space over $\mathbb C$ and is closed under the product given by composition, so it is an associative algebra.
For any subset $S\subset\operatorname{End}(W)$, we define $\mathcal{A}(S)$ to be the \emph{associative closure} of $S$; this is the smallest subalgebra of $\operatorname{End}(W)$ containing $S$, and can be obtained as the linear span of all products of elements of $S$. We also say that $S$ generates $\mathcal{A}(S)$ as an associative algebra.

Similarly, the \emph{Lie closure} $\langle S\rangle_{\mathrm{Lie}}$ is defined as the smallest real Lie algebra containing $S$. It is clear that $\langle S\rangle_{\mathrm{Lie}} \subset \mathcal{A}(S)$ and
$\mathcal{A}(\langle S\rangle_{\mathrm{Lie}}) = \mathcal{A}(S)$.

As a consequence from the proof of Theorem~\ref{DLAMainThm}, presented in Appendix \ref{sec:proofIII.1},
we obtain a description of the associative closure of $\mathfrak{g}_{\xi}$.

\begin{cor}\label{cor:Agxi}
We have 
\[\mathcal{A}(\mathfrak{g}_{\xi}) = \operatorname{End}(W_0)\pi_{W_0} \oplus \mathbb{C}[H_P \pi_{\widetilde{W}}],
\]
where $\pi_{W_0}$, $\pi_{\widetilde{W}}$ are the orthogonal projections \eqref{eq:proj}.
\end{cor}

Our next result shows that the associative closure of any DLA associated with a QAOA circuit for a given problem Hamiltonian and initial state contains the associative closure of the corresponding Grover-mixer QAOA.

\begin{thm}\label{ContainmentThm}
Let \( H_M \) be any mixer Hamiltonian with \( \ket{\xi}\) the unique ground state of \(-H_M\), and
$\mathfrak{g} = \langle iH_M, iH_P \rangle_{\mathrm{Lie}}$ be the corresponding DLA.
Then there exists a polynomial $g(t) \in\mathbb{R}[t]$ such that $G_M = g(H_M)$.
As a consequence, 
\begin{equation}\label{eq:AgAgxi}
\mathcal{A}(\mathfrak{g}_\xi) \subseteq \mathcal{A}(\mathfrak{g}).
\end{equation}
\end{thm}

The above theorem applies to the case where $H_M=B$ is the standard $X$-mixer $B$ from Eq.\ \eqref{Xmixer}, and the \emph{standard DLA} \(\mathfrak{g}_{\mathrm{std}}:=\langle iB,\; iH_P\rangle_{\mathrm{Lie}}\), for the initial state $\ket\xi=\ket{+\cdots+}$. Therefore,
\begin{equation}\label{eq:Agst}
\mathcal{A}(\mathfrak{g}_{+\cdots+}) \subseteq \mathcal{A}(\mathfrak{g}_{\mathrm{std}}).
\end{equation}
Note, however, that the inclusion \eqref{eq:Agst} does not imply an inclusion of the corresponding Lie algebras;
see Section \ref{sec:VI.B} below for examples in which we compare them.

From the containment \eqref{eq:AgAgxi}, we derive the following:
\begin{cor}\label{cor:containment}
Let \( H_M \) be any mixer Hamiltonian with \( \ket{\xi}\) the unique ground state of \(-H_M\), and
$\mathfrak{g} = \langle iH_M, iH_P \rangle_{\mathrm{Lie}}$ be the corresponding DLA.
Then the irreducible \(\mathfrak{g}_\xi\)-representation \(W_0\) 
is contained in an irreducible \(\mathfrak{g}\)-representation \(V_0\).
\end{cor}

This corollary has a number of useful consequences.  
First, note that the initial state \(\ket{\xi}\) belongs to \(W_0\), and hence also to \(V_0\).  
Since \(W_0\) (respectively, \(V_0\)) is preserved by the action of the DLA \(\mathfrak{g}_\xi\) (respectively, \(\mathfrak{g}\)),
the same is true for the corresponding dynamical Lie groups. Hence,
the state of the circuit (prior to measurement) remains inside \(W_0\) for GM--QAOA and inside \(V_0\) for the other QAOA
(see Eq.~\eqref{qaoa-chain} and Figure~\ref{QAOAcircuit}).

Denote by \(\lambda_{\min}\) the minimal value of the objective function $F$ as in~\eqref{eq:optfunc}, and let
\[
    \ket{\xi_{\lambda_{\min}}}
    := \frac{1}{\sqrt{n_{\min}}}\sum_{\substack{x\in\mathbb{B}^n\\ F(x)=\lambda_{\min}}}\ket{x}
\]
be the normalized uniform superposition over all optimal solutions, where \(n_{\min}\) is the number of such bit strings.
A priori, it is not clear whether \(V_0\) contains any eigenvectors of \(H_P\) with eigenvalue \(\lambda_{\min}\). 
However, as \(\ket{\xi_{\lambda_{\min}}}\in W_0\) and \(W_0 \subseteq V_0\), it follows that \(\ket{\xi_{\lambda_{\min}}} \in V_0\). 

As the QAOA dynamics preserve \(V_0\) and the initial state is in \(V_0\), it is possible to approximate the state \(\ket{\xi_{\lambda_{\min}}}\) with the state $\ket{\psi(\boldsymbol{\beta},\boldsymbol{\gamma})} = U(\boldsymbol{\beta},\boldsymbol{\gamma}) \ket{\xi}$
with any desired precision, for a suitable choice of parameters $(\boldsymbol{\beta},\boldsymbol{\gamma})$ and sufficiently large circuit depth $p$.
Consequently, a computational-basis measurement performed on \(\ket{\xi_{\lambda_{\min}}}\) will give a solution \(x \in \mathbb{B}^n\) with \(F(x) = \lambda_{\min}\) with a high probability.

\subsection{Commutant of the Dynamical Lie Algebra}

The \emph{commutant} of a subset $S\subset\operatorname{End}(W)$ is defined as the set $\mathcal{C}(S)$ of all $T\in\operatorname{End}(W)$ that commute with all elements of $S$. Note that $\mathcal{C}(S)$ is an associative algebra. 
It is clear from the definitions that
\begin{equation}\label{eq:CS}
\mathcal{C}(S) = \mathcal{C}(\langle S\rangle_{\mathrm{Lie}}) = \mathcal{C}(\mathcal{A}(S)).
\end{equation}


Now consider a DLA $\mathfrak{g} = \langle iH_M, iH_P \rangle_{\mathrm{Lie}}$ generated by Hamiltonians $H_M$, $H_P$. Then its commutant $\mathcal{C}(\mathfrak{g}) = \mathcal{C}(H_M,H_P)$ is the set of all linear operators commuting with $H_M$ and $H_P$. In the physical setting, elements of the commutant correspond to \emph{conserved quantities}: operators that remain invariant under the dynamics generated by the DLA. 

It follows from Theorem \ref{ContainmentThm} that the commutant of any DLA associated with a QAOA circuit for a given problem Hamiltonian and initial state is contained in the commutant of the corresponding Grover-mixer QAOA.

\begin{cor}\label{cor:maxcomm}
Let \( H_M \) be any mixer Hamiltonian with \( \ket{\xi}\) the unique ground state of \(-H_M\). Then
$\mathcal{C}(H_M,H_P) \subseteq \mathcal{C}(G_M,H_P)$, where $G_M = \ket\xi\bra\xi$ is the Grover mixer.
\end{cor}

We can explicitly describe the commutant of the GM-QAOA DLA. In particular, the dimension of the commutant quantifies the number of independent symmetries preserved by the system. 

\begin{thm}\label{LargestCommutantThm}
Let \( H_P \) be the problem Hamiltonian, and \( \{ W_{\lambda_j} \}_{1 \leq j \leq r} \) be its eigenspaces, with \( \dim (W_{\lambda_j}) = n_j \). Fix an initial state $\ket\xi$, and define $\widetilde{W}$ and $W_0$ by \eqref{eq:Wtilde}, \eqref{eq:W0}.
Then the commutant of the GM-QAOA DLA \( \mathfrak{g}_{\xi} \) is
\begin{equation}\label{eq:Cgxi}
\begin{aligned}
\mathcal{C}(\mathfrak{g}_{\xi}) 
&= \bigoplus_{j=1}^{d} \operatorname{End}(\widetilde{W}_{\lambda_j}) \pi_{\widetilde{W}_{\lambda_j}} \\
&\oplus\bigoplus_{j=d+1}^{r} \operatorname{End}(W_{\lambda_j}) \pi_{W_{\lambda_j}}
\oplus \mathbb{C} \pi_{W_0},
\end{aligned}
\end{equation}
where $\pi$ denote the orthogonal projections corresponding to the decompositions \eqref{simplifiedDecompForGmax},
\eqref{eq:Wtilde}.
We have
\[
\dim \mathcal{C}(\mathfrak{g}_{\xi}) =  \sum_{j=1}^{d} (n_j - 1)^2+\sum_{j=d+1}^{r} n_j^2+1.
\]
\end{thm}

The above theorem simply means that the commutant $\mathcal{C}(\mathfrak{g}_{\xi})$ consists of block diagonal matrices relative to the decompositions \eqref{simplifiedDecompForGmax}, \eqref{eq:Wtilde}.

The proofs of Theorems \ref{DLAMainThm}, \ref{IsotypicDecomp}, \ref{ContainmentThm}, \ref{LargestCommutantThm} are presented in Appendix \ref{sec:VIII}.

\section{Loss Function Statistics and the Barren Plateau Phenomenon}\label{sec:V}

In this section, we investigate the asymptotic behavior of the variance of the loss function for QAOA circuits with a mixer Hamiltonian of type $G_{M}$. 
Analyzing this variance is essential for understanding the emergence of barren plateaus in the optimization landscape \cite{larocca2025barren, RBSKMLC}.

Recall that the \emph{loss function} in QAOA is defined as the expectation value of the problem Hamiltonian $H_P$ with respect to the parametrized quantum state:
\begin{equation}
\begin{aligned}
\ell_{\boldsymbol{\beta}, \boldsymbol{\gamma}} (\rho, H_P) 
    :=&\, \bra{\psi(\boldsymbol{\beta},\boldsymbol{\gamma})} H_P \ket{\psi(\boldsymbol{\beta},\boldsymbol{\gamma})} \\
    =&\, \operatorname{Tr} \left[ U(\boldsymbol{\beta}, \boldsymbol{\gamma}) \, \rho \, U^\dagger(\boldsymbol{\beta}, \boldsymbol{\gamma}) \, H_P \right],
\end{aligned}
    \label{LossFnEqn}
\end{equation}  
where $\rho = \ket{\xi}\bra{\xi}$ denotes the pure-state density operator 
associated with the initial state $\ket{\xi}$
(see Eqs.\ \eqref{qaoa-chain}, \eqref{eq:psibg}, \eqref{eq:empmean}).
More generally, we can define the loss function
\begin{equation}
    \ell_{\boldsymbol{\beta}, \boldsymbol{\gamma}} (\rho, O) = \operatorname{Tr} \left[ U(\boldsymbol{\beta}, \boldsymbol{\gamma}) \, \rho \, U^\dagger(\boldsymbol{\beta}, \boldsymbol{\gamma}) \, O \right]
    \label{LossFnEqn2}
\end{equation}  
for any observable $O$.


Let $\zeta_{\Lambda}$ be a random variable uniformly distributed over the set of values  
\(
\left\{\lambda_1, \lambda_2, \dots, \lambda_d \right\},
\) 
attained by the objective function on the vectors appearing in the decomposition \eqref{InitStateDecomp} of the initial state $\ket\xi$.
The expected value and variance of $\zeta_{\Lambda}$ are given by:
\begin{align}
\mathbb{E}(\zeta_{\Lambda}) &= \frac{1}{d} \sum_{1 \le j \le d} \lambda_j, \label{eq:expectation}\\
\operatorname{Var}(\zeta_{\Lambda}) &= \frac{1}{d^2} \sum_{1 \le i < j \le d} (\lambda_i - \lambda_j)^2. \label{eq:variance}
\end{align}
We can express the variance of the GM-QAOA loss function as follows.

\begin{thm}\label{thm:ExpVar}
For sufficiently large circuit depth $p$, the variance of 
$\ell_{\boldsymbol{\beta}, \boldsymbol{\gamma}}(\rho, H_P)$ 
over the parameters $(\boldsymbol{\beta}, \boldsymbol{\gamma})$ is approximated by
\begin{equation} 
\operatorname{Var}_{\boldsymbol{\beta}, \boldsymbol{\gamma}}\bigl[\ell_{\boldsymbol{\beta}, \boldsymbol{\gamma}} (\rho, H_P) \bigr] 
\approx \frac{\operatorname{Var}(\zeta_{\Lambda})}{d+1}.  
\label{varianceFormula} 
\end{equation}
Similarly, the expected value of the loss function satisfies
\begin{equation}
\mathbb{E}_{\boldsymbol{\beta}, \boldsymbol{\gamma}}\bigl[\ell_{\boldsymbol{\beta}, \boldsymbol{\gamma}} (\rho, H_P) \bigr]
\approx \frac{\mathbb{E}(\zeta_{\Lambda})}{d}.
\label{ExpectationOfLossFunction}
\end{equation}
\end{thm}
The above asymptotic formulas hold when the circuit depth $p$ is sufficiently large. The precise threshold of $p$ for which they become valid is established in Theorems~2 and~3 of~\cite{RBSKMLC}.
\label{LossFuncVariance}

The phenomenon of \emph{barren plateaus} in variational quantum algorithms refers to the exponential suppression of gradients in the parameter landscape, which renders classical training methods ineffective \cite{mcclean2018barren,larocca2025barren}. A key diagnostic of this behavior is the variance of the loss function over the parameter space: when this variance is exponentially small in the number of qubits \( n \), the corresponding gradient magnitudes also vanish exponentially \cite{arrasmith2021equivalence}.

We say that an objective function 
\(
F\colon \mathbb{B}^n \to \mathbb{R}
\)
is \emph{\(s\)-local} if it can be written as
\begin{equation}\label{sLocalFunc}
    F(x) = \sum_{j=1}^{T} f_j\big(x_{S_j}\big),
\end{equation}
where each component function 
\(
f_j \colon \mathbb{B}^{s} \to \mathbb{R}
\)
depends only on the subset of bits indexed by \( S_j \subseteq \{1, \dots, n\} \) with \( |S_j| = s \). 
The total number of such local terms is \( T \le \binom{n}{s} \).

The next theorem establishes an explicit lower bound on the variance of the normalized loss function for GM-QAOAs with underlying $s$-local objective functions. This bound demonstrates that, for sufficiently large circuit depth, the variance does not vanish exponentially in \(n\), and therefore these instances avoid barren plateaus.

In order to introduce the normalization of $H_P$, let
\[
|\lambda_{\max}| := \max_{x \in \mathbb{B}^n} |F(x)|
\]
be the maximum absolute value of the objective function \(F(x)\). We define the normalized problem Hamiltonian
\[
\widehat{H}_P := \frac{1}{|\lambda_{\max}|} H_P
\]
with respect to the spectral (or operator) norm.

\begin{thm}\label{thm:BPtheorem}
Suppose that the objective function \(F(x)\) is \(s\)-local and takes only integer values, 
and the initial state \(\ket{\xi}\) is not an eigenvector of the problem Hamiltonian \(H_P\). Let 
\[
M := \max_{1 \le j \le T} \max_{y \in \mathbb{B}^s} |f_j(y)|
\]
denote the maximum absolute value attained by the local component functions \(f_j\) from Eq.\ \eqref{sLocalFunc}.
Then, for sufficiently large circuit depth \(p\), the variance of the loss function 
\(\ell_{\boldsymbol{\beta}, \boldsymbol{\gamma}}(\rho, \widehat{H}_P)\) over the parameters \((\boldsymbol{\beta}, \boldsymbol{\gamma})\) satisfies the lower bound
\begin{equation}
\label{VarLowerBound} 
\operatorname{Var}_{\boldsymbol{\beta}, \boldsymbol{\gamma}}\bigl[\ell_{\boldsymbol{\beta}, \boldsymbol{\gamma}} (\rho, \widehat{H}_P) \bigr]
\geq 
\frac{(s!)^2}{12\, M^2\, n^{2s}}.
\end{equation}

\end{thm}

The proofs of Theorems \ref{thm:ExpVar} and  \ref{thm:BPtheorem} can be found in Appendix \ref{sec:VIII}.


\section{Applications}\label{sec:VI}
One of the most natural choices for the initial state in QAOA is the uniform superposition of all computational basis states,
\(
\ket{\xi} = \ket{+\cdots+}
\)
(see Eq. \eqref{eq:xi++}).
In this case, the Grover mixer \( G_{M} \)  acts by \( 1 \) on \( \ket{\xi} \) and by \( 0 \) on the hyperplane orthogonal to \( \ket{\xi} \) with respect to the standard Hermitian inner product (i.e., the subspace of states whose coordinate sum is zero).

\begin{rmk}
The Grover mixer \( G_{M} \) admits the Pauli-\( X \) representation
\begin{equation}
    \begin{aligned}
    G_{M} &= \frac{1}{2^n}(I + X_1)(I + X_2) \cdots (I + X_n)\\
     &=\frac{1}{2^n}\sum_{l=0}^n e_l(X_1, \dots, X_n),
    \end{aligned} \label{eq:grover}
\end{equation}
where \( e_l(X_1, \dots, X_n) \) is the degree-\( l \) elementary symmetric polynomial in \( X_1, \dots, X_n \) given by
\begin{equation}\label{eq:el}
e_l(X_1, \dots, X_n) = \sum_{1 \le i_1 < i_2 < \cdots < i_l \le n} X_{i_1} X_{i_2} \cdots X_{i_l},
\end{equation}
with the convention that \( e_0(X_1, \dots, X_n) = I \) is the identity operator.
\end{rmk}

For each value \( \lambda_j \) of the objective function \( F \), we define the normalized uniform vector over the corresponding level set as
\begin{equation}
    \ket{\xi_j} := \frac{1}{\sqrt{n_j}} \sum_{x \,:\, F(x) = \lambda_j} \ket{x}, 
    \qquad j = 1,  \dots, r.
\label{BasisOfTrivRep}
\end{equation}
The set \( \{ \ket{\xi_1}, \dots, \ket{\xi_{r}} \} \) forms an orthonormal basis for \( W_{0} \).  
The uniform superposition state \( \ket{\xi} \) can be expressed as
\[
\ket{\xi} = \frac{1}{\sqrt{2^n}}\sum_{j=1}^r\sqrt{n_j}\,\ket{\xi_j},
\]
so that in the notation of Eq.\ \eqref{InitStateDecomp} the coefficients are \( c_j = \sqrt{\frac{n_j}{2^n}} \) and \( d = r\). 
Relative to the basis \( \{ \ket{\xi_1}, \dots, \ket{\xi_{r}} \} \),
the restrictions of the cost Hamiltonian \( H_P \) and the Grover mixer \( G_M \) to the subspace \( W_0 \) are (cf.\ \eqref{RestrictionsToTriv2}, \eqref{RestrictionsToTriv1}):
\begin{align}\label{RestrictionsToTriv}
 H_{P,0} &=
\begin{pmatrix}
\lambda_1 & 0 & \cdots & 0 \\
0 & \lambda_2 & \cdots & 0 \\
\vdots & \vdots & \ddots & \vdots \\
0 & 0 & \cdots & \lambda_{r}
\end{pmatrix},\\
G_{M,0} &=
\frac{1}{2^n}
\begin{pmatrix}
n_1 & \sqrt{n_1 n_2} & \cdots & \sqrt{n_1 n_{r}} \\
\sqrt{n_2 n_1} & n_2 & \cdots & \sqrt{n_2 n_{r}} \\
\vdots & \vdots & \ddots & \vdots \\
\sqrt{n_{r} n_1} & \sqrt{n_{r} n_2} & \cdots & n_{r}
\end{pmatrix}.
\notag
\end{align}
Assuming $r<2^n$, the corresponding dynamical Lie algebra $\mathfrak{g}_\xi$ is isomorphic to 
\(
\mathfrak{su}(r) \oplus \mathfrak{u}(1)\oplus \mathfrak{u}(1),
\)
by Theorem~\ref{DLAMainThm}.

\subsection{Grover Search with QAOA}\label{sec:V.search}

Recall that the classical Grover search algorithm \cite{Grover} addresses the task of identifying a marked element among $2^n$ possibilities, where the set $\mathcal M$ of marked elements may contain one or several items. The algorithm achieves query complexity $\mathcal{O}(\sqrt{2^n/|\mathcal{M}|})$, where $|\mathcal{M}|$ denotes the number of marked elements. 

The QAOA formulation of Grover search \cite{MTB} 
employs the mixer Hamiltonian $G_M=\ket\xi\bra\xi$, with the initial state chosen as $\ket{\xi} = \ket{+\cdots+}$, and the problem Hamiltonian $H_P= \ket\omega\bra\omega$ given by the projector onto the uniform superposition of all marked states,
\[
\ket\omega = \frac{1}{\sqrt{|\mathcal{M}|}} \sum_{x \in \mathcal{M}} \ket{x}.
\]

Since $H_P$ has only two eigenvalues, $0$ and $1$, by Theorem \ref{DLAMainThm}, the corresponding dynamical Lie algebra  is
\[
\mathfrak{g}_{ \xi} \cong \mathfrak{su}(2) \oplus \mathfrak{u}(1) \oplus \mathfrak{u}(1).
\]
Moreover, Theorem \ref{IsotypicDecomp} tells us that, as a representation of \( \mathfrak{g}_\xi\), the Hilbert space \( W = (\mathbb{C}^2)^{\otimes n} \) decomposes as a direct sum of an irreducible \(2\)-dimensional representation spanned by $\ket\omega$ and $\ket\xi$, and a sum of \( 1 \)-dimensional representations.

\begin{rmk}
Although the Lie algebra \( \mathfrak{g}_\xi\) is only of dimension $5$, the classical simulation techniques based on $\mathfrak{g}$-sim described in \cite{GLCCS}  are not directly applicable (see also \cite{CLG}). The subtlety lies in the fact that the precise embedding of $\mathfrak{g}_{ \xi}$ into the ambient Lie algebra $\mathfrak{u}(2^n)$ is not explicitly known.
\end{rmk}

Another approach to performing Grover search with QAOA was proposed in \cite{jiang2017near}. Let us describe it in the case where we search for a single marked element $\omega \in \mathbb{B}^n$. The problem Hamiltonian $H_P = \ket\omega\bra\omega$ is again the projector onto the marked state $\ket\omega$, while the mixer $H_M$ is taken to be the standard $X$-mixer $B$ from Eq.\ \eqref{Xmixer}. They generate the standard DLA \(\mathfrak{g}_{\mathrm{std}}:=\langle iB,\; iH_P\rangle_{\mathrm{Lie}}\).

We observe that, after conjugation with $H^{\otimes n}$ that swaps Pauli $X$ and $Z$, we can treat $B$ as the problem Hamiltonian
\[
H_Z := \sum_{j=1}^n Z_j,
\]
while $H_P$ can be viewed as a Grover mixer $G_M=\ket\xi\bra\xi$ with $\ket\xi=H^{\otimes n}\ket\omega$. Therefore, the results of Theorems \ref{DLAMainThm} and \ref{IsotypicDecomp} apply. Moreover, it is easy to see that $H_Z$ has eigenvalues $-n,-n+2,\dots,n-2,n$.
Since $\ket\xi=\ket{s_1s_2\cdots s_n}$ for some $s_i=\pm$, all eigenvalues of $H_Z$ are attained in the expansion of $\ket\xi$ in terms of the computational basis. Hence, $d=n+1$ and we obtain the following:

\begin{cor}\label{cor:search}
The standard dynamical Lie algebra corresponding to QAOA for Grover search of a single marked state $\ket\omega$ is 
\[
\mathfrak{g}_{\mathrm{std}} \cong \mathfrak{su}(n+1) \oplus \mathfrak{u}(1) \oplus \mathfrak{u}(1).
\]
As a representation of \( \mathfrak{g}_{\mathrm{std}}\), the Hilbert space \( W = (\mathbb{C}^2)^{\otimes n} \) decomposes as a direct sum of an irreducible \((n+1)\)-dimensional representation containing $\ket\omega$, and a sum of \( 1 \)-dimensional representations.
\end{cor}

\subsection{Analysis of GM-QAOA for the MaxCut and Weighted MaxCut Problems}\label{sec:V.maxcut}
The \emph{MaxCut} problem is a well-known combinatorial optimization problem defined on an undirected graph \( G = (V, E) \).  
Given a binary assignment of vertices to two disjoint sets, the goal is to maximize the number of edges that have endpoints in different sets.  
Equivalently, the problem seeks a partition of the vertex set \( V = \{1, \dots, n\} \) into two parts such that the number of \emph{crossing edges}---those whose endpoints belong to different subsets---is maximized.  

For a binary string \( x = (x_1, \dots, x_n) \in \mathbb{B}^n \), where \( x_i = 0 \) if and only if vertex \( i \) is placed in the first subset of the cut, the objective function for MaxCut is given by:
\[
F(x) = \sum\limits_{(i,j) \in E} \bigl((1 - x_i)x_j + x_i(1 - x_j)\bigr).
\]
Here, each term in the sum contributes \( 1 \) if the endpoints \( i \) and \( j \) lie on opposite sides of the cut, and \( 0 \) otherwise.

%
%
%

Note that the range of \( F \) is a subset of \( \{0, 1, 2, \dots, |E|\} \). 
Hence,
\[
d = \dim(W_0) \le |E| + 1 \le \binom{n}{2} + 1.
\]

It is instructive to derive a lower bound on the variance of the loss function from Theorem~\ref{thm:BPtheorem}. 
Since the objective function is $2$-local and each constituent term 
\[
(1 - x_i)x_j + x_i(1 - x_j)
\] 
attains a maximum value of $M = 1$, 
Eq.~\eqref{VarLowerBound} yields the bound
\begin{equation}\label{MaxCutVarBound}
    \operatorname{Var}_{\boldsymbol{\beta}, \boldsymbol{\gamma}}\big[\ell_{\boldsymbol{\beta}, \boldsymbol{\gamma}} (\rho, \widehat{H}_P) \big] \ge \frac{2^2}{12 \, n^4} = \frac{1}{3 n^4}.
\end{equation}

An interesting variation of the MaxCut problem is the \emph{weighted MaxCut} problem, which is widely considered in QAOA since edge weights encode application‑specific costs and generalize a larger class of optimization problems. Consequently, recent studies have developed graph‑decomposition methods to fit large weighted instances to NISQ devices~\cite{herrman2023graph}; derived formulas to motivate weight-aware parameter choices~\cite{ZWCHL}; and derived analytical expressions for weighted hypergraph problems that employ the many-body Grover-type mixers~\cite{NKK}.

The formulation of weighted MaxCut is similar to that of MaxCut. Consider an undirected graph $G=(V,E)$ with edge weights $w_{ij}=w_{ji}>0$ corresponding to edges $(i,j)\in E$, and $w_{ij}=0$ for $(i,j)\not\in E$. The objective is to partition the vertex set \( V = \{1, \dots, n\} \) into subsets labeled $0$ and $1$, such that the weighted sum of edges corresponding to different subsets in the partition is maximized, leading to the $2$-local objective function 
\[
F(x) = \sum\limits_{(i,j) \in E} w_{ij} \bigl((1 - x_i)x_j + x_i(1 - x_j)\bigr).
\]

Setting $w_{\mathrm{max}}:=\max w_{ij}$, implies that the constituent terms in $F$ attain a maximum value of $M = w_{\mathrm{max}}$. Thus, Eq.~\eqref{VarLowerBound} yields the bound
\begin{equation}\label{MaxCutVarBound2}
    \operatorname{Var}_{\boldsymbol{\beta}, \boldsymbol{\gamma}}\big[\ell_{\boldsymbol{\beta}, \boldsymbol{\gamma}} (\rho, \widehat{H}_P) \big] \ge \frac{1}{3w_{\mathrm{max}}^2n^4}.
\end{equation}

\subsection{Numerical Simulations for MaxCut}\label{sec:V.numerics}

We carried out a series of numerical experiments to empirically validate our analytical results. Namely, we analyze the loss function for the MaxCut problem, to investigate the dependence of the variance on circuit depth $p$ and provide an estimate for sufficient depth $p$ in which GM-QAOA avoids barren plateaus (see Theorem~\ref{thm:BPtheorem}). We use a noiseless state vector simulation in PennyLane~\cite{pennylane}, to allow access to exact probability distributions and expectation values.

\begin{figure*}
\centering
\begin{subfigure}{}
\includegraphics[width=0.45\textwidth]{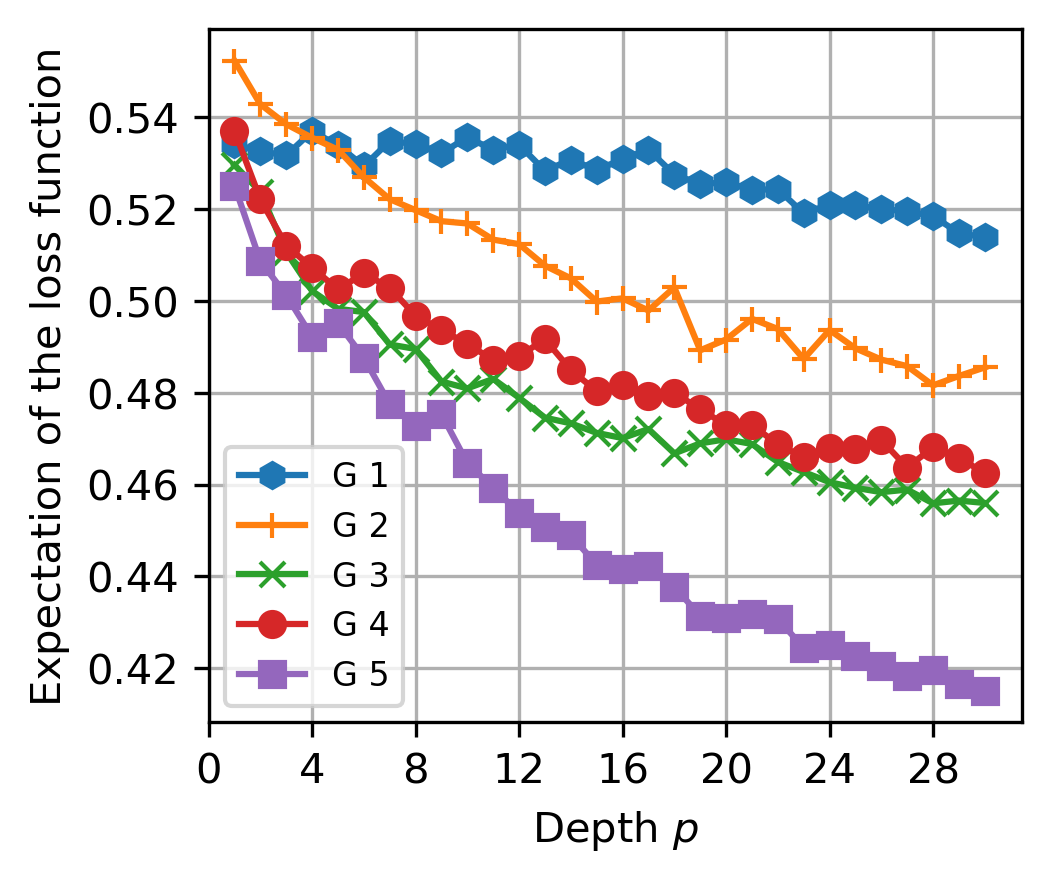}
\end{subfigure}
\hfill
\begin{subfigure}{}
\includegraphics[width=0.49\textwidth]{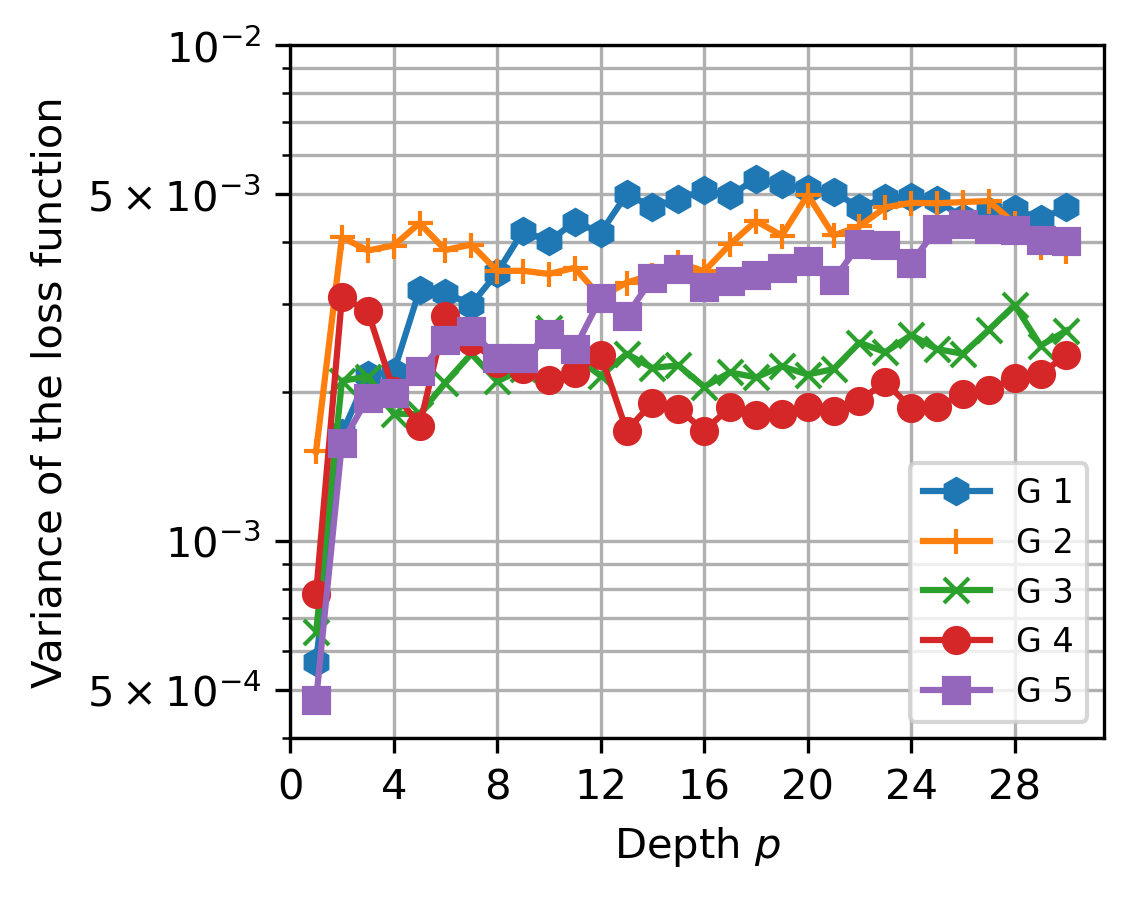}
\end{subfigure}
\caption{ 
    \textbf{Variance and expectation of the loss function across depths.} 
    We sampled five random graphs $G_i$ ($1\leq i \leq 5$) with $n=8$ vertices, and computed the expectation and variance of the loss function for $100$ random sampled parameters, with circuit depths $p$ from $1$ up to $30$. 
    \textbf{Left panel.} 
    As a function of $p$, the graph of the expectation seems convex up, which suggests it is a decreasing function with a slowing rate of decrease.
    \textbf{Right panel.} 
    The variance shows a sharp growth for small $p$, after which it flattens and stays elevated.
    The maximum and minimum values in this plot represent the bounds corresponding to $n=8$ in Fig.~\ref{fig:var_range}.
}
\label{fig:expvar_depth}
\end{figure*}

Our methodology centered on characterizing the statistical data (\ref{varianceFormula}) and (\ref{ExpectationOfLossFunction}), for GM-QAOA with MaxCut. 
Our findings for the expectation and variance of the loss function are presented in Figure \ref{fig:expvar_depth}.

Consistent with the notation in Theorem~\ref{thm:BPtheorem}, the normalized problem Hamiltonian we want to maximize is 
\[
\widehat{H}_P = \frac{1}{|E|}\sum_{(i,j)\in E} \frac{1}{2}(I - Z_iZ_j).
\] 
The exact Grover mixer $G_M$ is implemented as in ~\eqref{eq:GM_unitary}, where $U_\xi=H^{\otimes n}$. In PennyLane~\cite{pennylane}, this operation is realized by applying to the initial state $\ket{0\cdots 0}$ a sequence of gate operations consisting of: an initial layer of $n$ Pauli $X$ gates, followed by a multi-controlled phase gate which applies $e^{-i\beta}$ to an ancilla qubit, and a final layer of $n$ Pauli $X$ gates with the entire sequence conjugated by $H^{\otimes n}$. We further verified the correctness of this construction by confirming the consistency of the Grover mixing unitaries with the exact matrix representation in~\eqref{eq:GM} for small $n$.

\begin{figure}[h]
    \centering
    \includegraphics[width=0.45\textwidth]{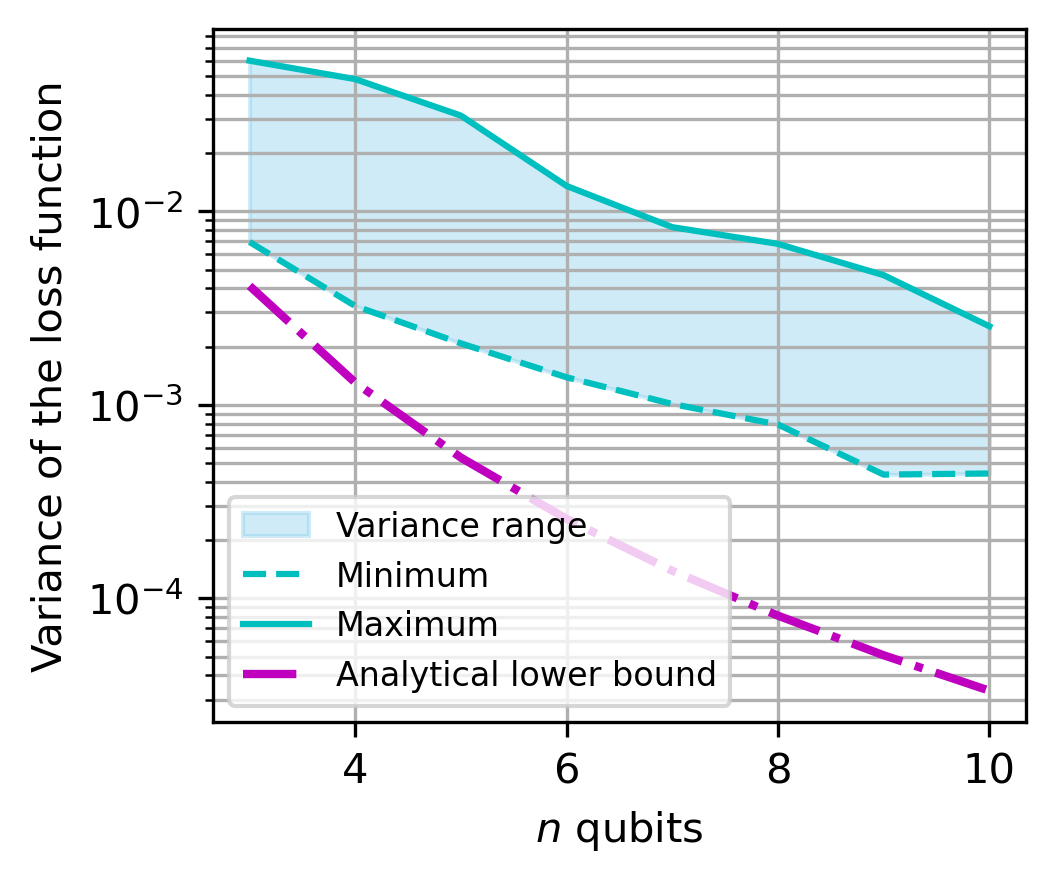} 
    \caption{
        \textbf{Upper and lower bounds for the variance of the loss function.} As a function of $n$, the variance is bounded from below by the analytical lower bound $\frac{1}{3n^4}$. For each $n$ up to $10$, we sampled $5$ random graphs, ran the QAOA circuit with randomly sampled $100$ parameters per graph, computed the variance of the loss function, and recorded the minimum and maximum variance for all depths up to $p=30$.
    }
    \label{fig:var_range}
\end{figure}

To verify the analytical lower bound $\frac{1}{3n^4}$ from Eq. \eqref{MaxCutVarBound}, 
we plotted in Figure \ref{fig:var_range} the empirical variance as a function of the number of qubits $n$. The variance values, obtained from the simulations, were compared directly against the theoretical curve to confirm that they remain above the predicted lower bound, consistent with Eq.\ \eqref{MaxCutVarBound}. Across all studies, we picked 5 randomly generated Erd\"os–R\'enyi graphs and sampled $100$ parameter pairs $(\boldsymbol{\beta}, \boldsymbol{\gamma})$, where $\beta_j\in [0,\pi)$ and $\gamma_j\in [0,2\pi)$.  

There were also important limitations. This numerical study did not address constrained optimization problems, a domain in which the Grover Mixer excels. Additionally, system sizes ($3\leq n \leq 10)$ and depths ($1\leq p \leq 30$) were limited by simulation cost, so extrapolation to larger-scale instances remains a challenge.


\subsection{Comparison of DLAs for GM-QAOA and Standard QAOA for MaxCut on Various Graphs}\label{sec:VI.B}

Unlike the dynamical Lie algebras $\mathfrak{g}_{\xi}$ arising from GM-QAOA, relatively little is known about their standard counterparts generated by the problem Hamiltonian together with the standard mixer 
\[
B = \sum_{j=1}^n X_j
\]
for the MaxCut problem. Below, we compare the known results for standard DLAs $\mathfrak{g}_{\mathrm{std}}$ 
with those for GM-QAOA DLAs  \( \mathfrak{g}_{+\cdots+} \).

\subsubsection{Path graphs} For MaxCut on the path graph with \( n \) vertices, the objective function \( F \) counts the number of edges crossing a cut. For any subset of the \( n-1 \) edges, there exists a vertex partition whose cut set contains exactly those edges.
 Therefore,
\(\operatorname{Range}(F) = \{0, 1, \dots, n-1\}\),
which implies

\[
\dim(\mathfrak{g}_{\xi}) = n^2 + 1.
\]
In comparison, it was shown that the standard DLA for the path graph \( \mathfrak{g}_{\mathrm{std}} \cong \mathfrak{u}(n) \) and has dimension 
\(n^2\), by Lemma~6 in \cite{KLFCCZ}.

\subsubsection{Cycle graphs} For MaxCut on the cycle graph with \( n \) vertices, label the vertices \( 1, 2, \dots, n \) consecutively around the cycle. Each cut corresponds to a partition of the vertex set \( V \) into two disjoint sets \( S \) and \( V \setminus S \). Without loss of generality, 
it suffices to consider \( 0 \le |S| \le \lfloor \frac{n}{2} \rfloor+1 \).

The value of the objective function is given by the number of edges connecting vertices in \( S \) to those in \( V \setminus S \). Each vertex in \( S \) can contribute at most two edges to the cut, with the maximum value \( 2\lfloor \frac{n}{2} \rfloor \) achieved by an alternating configuration (for odd \( n \), a perfect alternating configuration is impossible, and two adjacent vertices reside within the same set). The minimum value is \( 0 \) when \( S = \varnothing \). 
Only cuts with an even number of crossing edges can be realized, which implies that the objective function values are all even numbers between $0$ and $2\lfloor \frac{n}{2} \rfloor$.

 We conclude that the number of distinct values is \( r= \lfloor n/2 \rfloor + 1 \), leading to 
\[
\dim(\mathfrak{g}_{ \xi}) = \left(\lfloor n/2 \rfloor + 1\right)^2 + 1.
\]
In this case, the standard DLA \( \mathfrak{g}_{\mathrm{std}} \cong \mathfrak{su}(2)^{\oplus (n-1)} \oplus \mathfrak{u}(1)^{\oplus 2} \), by Theorem~3 in \cite{ASYZ1}. Hence, 
\[
\dim(\mathfrak{g}_{\mathrm{std}}) = 3n - 1.
\]

\subsubsection{Complete graphs} In a complete graph, any subset of vertices of order \( k\) is connected to every vertex in its complement of size \( n-k\), so the number of edges crossing the cut is \( k(n-k) \). The number of distinct objective function values is \( r = \lfloor n/2 \rfloor + 1 \), giving 
\[
\dim(\mathfrak{g}_{ \xi}) 
= \left(\lfloor n/2 \rfloor + 1\right)^2 + 1.
\]
The decomposition of the standard DLA $\mathfrak{g}_{\mathrm{std}}$ into simple components and its center is unknown in this case; however, it was shown that 
$\dim(\mathfrak{g}_{\mathrm{std}})$ grows cubically in \( n \) (see Section~4.3 in \cite{ASYZ1}).

\subsubsection{House graph} Finally, we provide an illustrative example that highlights the difference in dimensions between the two dynamical Lie algebras, based on our observations for the majority of graphs with up to six vertices.

Consider the house graph, a $5$-vertex graph obtained by joining a triangle and a square along a common edge (see Figure~\ref{HouseGraph}).  
A straightforward computation shows that the objective function takes exactly five distinct values.  
Hence $r=5$, and 
\[
\dim(\mathfrak{g}_{ \xi}) 
= 26.
\]
In contrast, for the standard DLA, 
\[
\dim(\mathfrak{g}_{\mathrm{std}}) = 248,
\]
according to Figure~7 in \cite{KLFCCZ}.

\begin{figure}[h!]
\centering
\begin{tikzpicture}[scale=1]
  \foreach \i/\ang in {1/135, 2/225} {
    \node[circle, draw, fill=black, inner sep=2pt, label={left:$\i$}] (\i) at (\ang:2) {};
  }
  \foreach \i/\ang in {3/315, 4/45} {
    \node[circle, draw, fill=black, inner sep=2pt, label={right:$\i$}] (\i) at (\ang:2) {};
  }
  \node[circle, draw, fill=black, inner sep=2pt, label={above:$5$}] (5) at (0,3) {};
  
  \foreach \u/\v in {1/2, 2/3, 3/4, 4/1, 5/1, 5/4} {
    \draw (\u) -- (\v);
  }
\end{tikzpicture}
\caption{\textbf{House graph.}}
\label{HouseGraph}
\end{figure}
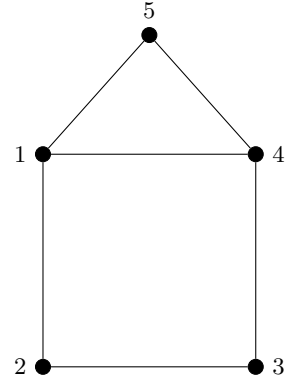

\subsection{Analyzing GM-QAOA for Optimization Problems beyond MaxCut}\label{sec:VI.C}


Beyond the MaxCut problem analyzed in the preceding subsection, we now present several additional examples of combinatorial optimization problems with \( s \)-local objective functions. These illustrate the wide applicability of GM-QAOA and offer further insight into the structure of the associated dynamical Lie algebras.


        
        \subsubsection{The $m$-SAT problem} 
        The $m$-SAT problem asks whether there exists a bit string  \( x = (x_1, \dots, x_n) \in \mathbb{B}^n \) that satisfies a given collection of clauses, each of which is a disjunction of \( m \) literals (variables or their negations). We can encode this as an $m$-local objective function of the form \eqref{sLocalFunc},
        where each clause function \( f_j \colon \mathbb{B}^m \to \{0,1\} \) evaluates to 1 if the clause is satisfied and 0 otherwise, and \( S_j \subseteq \{1, \dots, n\} \) is the subset of variables appearing in the $j$-th clause. 

        Applying Theorem~\ref{thm:BPtheorem} with $s=m$ and $M=1$, we obtain a lower bound on the variance of the loss function:
        \[
            \operatorname{Var}_{\boldsymbol{\beta}, \boldsymbol{\gamma}}\bigl[\ell_{\boldsymbol{\beta}, \boldsymbol{\gamma}} (\rho, \widehat{H}_P)\bigr] 
            \ge \frac{(m!)^2}{12 n^{2m}} \,,
        \]
        for sufficiently large circuit depth $p$.

        \subsubsection{Max-$k$-VertexCover} 
        Max-$k$-VertexCover is a constrained optimization problem on an undirected graph $G=(V,E)$ with $n$ vertices. The goal is to select a subset of $k$ vertices such that the number of edges incident to the chosen vertices is maximized. The objective function can be written as
        \[
            F(x) = \sum_{(i,j)\in E} \bigl(1 - (1-x_i)(1-x_j)\bigr),
        \]
        where $x_i = 1$ if vertex $i$ is included in the vertex cover and $x_i = 0$ otherwise. This function counts the number of edges covered by the selected vertices.

The constraint that we have selected exactly $k$ vertices restricts $x$ to bit strings of Hamming weight $k$. We take the initial state $\ket\xi$ to be their uniform superposition, i.e., the Dicke state $\ket{D^n_k}$.
       Note that $F(x)$ is $2$-local, and each constituent term $1 - (1-x_i)(1-x_j)$ attains a maximal value $M=2$. Therefore, by Theorem~\ref{thm:BPtheorem}, the variance of the corresponding loss function for sufficiently large circuit depth $p$ is bounded below by
        \(
             \frac{1}{12 n^{4}}.
        \)
        
        
         \subsubsection{Traveling Salesperson Problem} 
         The Traveling Salesperson Problem (TSP) is a permutation based optimization problem where the goal is to find an ordering of $k$ cities that minimizes the distance traveled if we visit each city once and return to the origin.
         We label the cities $1,\dots,k$, and
         represent a feasible tour as a permutation $\tau =(v_1,\dots,v_k)$,
         indicating traveling from city $v_1$ to city $v_2$, then $v_2$ to $v_3$, and so on until $v_n$, and returning to $v_1$. 
         
         Let the positive integer $w_{ij} = w_{ji}$ be the distance from city $i$ to city $j\ne i$.
         Then our goal is to minimize the objective function value of a tour $\tau=(v_1,\dots, v_k)$, 
         \[
         F(\tau)=\sum_{j=1}^k w_{v_jv_{j+1}}, \quad\text{where}\;\; v_{k+1}:=v_1.
         \]
         Setting $w_{\mathrm{min}}= \min\limits_{i\neq j}w_{ij}$ and $w_{\mathrm{max}}=\max\limits_{i\neq j}w_{ij}$, gives us possible objective function values between $kw_{\mathrm{min}}$ and $kw_{\mathrm{max}}$.
         
         Similar to the constructions in~\cite{Smelyanskiy2012NearTerm} and~\cite{HWORVB}, we can identify TSP with a $2$-local objective function,
        \[
            F(x)= \frac{1}{2}\sum_{1\le q,i,j\le k}w_{ij}x_{i,q}(x_{j,q+1} + x_{j,q-1}),
        \]
        where $x_{i,q}=1$ if city $i$ is in the $q$-th position in the tour and 0 otherwise. The additional constraints $\sum\limits_{q=1}^k x_{i,q}=1$ and $\sum\limits_{i=1}^k x_{i,q}=1$ ensure that each city is visited exactly once. Note this formalization of the problem requires $n=k^2$ qubits to encode  vectors $x=(x_{i,q})_{1\leq i,q\leq n}$.
        
        Applying Theorem~\ref{thm:BPtheorem}  with $n=k^2$, $s=2$ and $M= w_{\mathrm{max}}$, we obtain  $$\operatorname{Var}_{\boldsymbol{\beta}, \boldsymbol{\gamma}}\bigl[\ell_{\boldsymbol{\beta}, \boldsymbol{\gamma}} (\rho, \widehat{H}_P) \bigr] \geq \frac{1}{3w_{\mathrm{max}}^2k^{8}}.$$

\section{Discussion}\label{sec:VII}

Dynamical Lie algebras (DLAs) have emerged as a promising tool for analyzing the performance of variational quantum algorithms. 
Their structural properties---in particular, the dimensions of the simple subalgebras arising in the decomposition of the semisimple part into simple constituents---directly influence the variance of the loss function and thereby play a central role in determining the presence or absence of barren plateaus \cite{RBSKMLC,FHCKYHSP}. Despite this conceptual significance, the practical applicability of DLAs has been limited so far, as these algebras are in general difficult to compute explicitly.

Sharp contrasts arise when one compares different ansätze. 
For example, in the case of the multi-angle QAOA \cite{herrman2022multi} or also called free ansatz---where the generating set consists of all summands in both mixer and problem Hamiltonians---the associated Lie algebras are completely classified \cite{WKKB2} (see also \cite{WKKB1, KLFCCZ}). 
By contrast, the precise structure of the DLA associated to the standard QAOA ansatz, which uses the standard $X$-mixer \eqref{Xmixer}, is generally unknown (see \cite{ASYZ1, KLFCCZ} for partial results).

In the present manuscript, we provide a complete classification of the dynamical Lie algebras arising in QAOA with Grover-type mixers (defined by projections onto the initial state vector). 
The GM-QAOA DLAs admit a transparent decomposition of the Hilbert space into an irreducible subspace $W_0$ (spanned by the nonzero projections of the initial state onto distinct objective-value eigenspaces) plus a direct sum of one-dimensional invariant subspaces. 
Therefore, the dynamics of the GM-QAOA circuit are confined to $W_0$ and can approximate the superposition over optimal solutions for large enough depth.

If we consider the GM-QAOA DLA $\mathfrak{g}_{\xi}$ and any other QAOA DLA $\mathfrak{g}$ with a mixer Hamiltonian initialized in the same way, then we proved that the commutant $\mathcal{C}(\mathfrak{g}_{\xi})$ contains $\mathcal{C}(\mathfrak{g})$, while the associative algebra $\mathcal{A}(\mathfrak{g}_{\xi})$, generated by $\mathfrak{g}_{\xi}$, is contained in $\mathcal{A}(\mathfrak{g})$. 
This means that GM-QAOA attains the largest possible commutant among mixers with the same initial state, so it maximally preserves symmetries of the system. The inclusion of associative algebras tells us that the irreducible representation $W_0$ of the GM-QAOA DLA $\mathfrak{g}_{\xi}$ sits inside the corresponding irreducible representation of $\mathfrak{g}$.
Therefore, any other QAOA also reaches an optimal solution for large enough depth.

Furthermore, we demonstrate that, for a large class of optimization problems, the GM-QAOA DLA has dimension bounded by a polynomial in the problem size. 
This observation leads to an important algorithmic consequence: it rules out the emergence of barren plateaus, provided the depth $p$ is sufficiently large \cite{RBSKMLC}.
Determining explicit lower bounds on the required $p$ remains an open problem, which we leave for future work.

Since Grover-type mixers exhibit favorable behavior with respect to barren plateaus, it may be of interest to incorporate them into the pool of mixer candidates considered within Adapt-QAOA (introduced in ~\cite{ZTBCEBM}). 
In this setting, the adaptive strategy could select Grover mixers whenever their algebraic structure leads to enhanced trainability, thus potentially improving the efficiency and robustness of the algorithm.

\section*{Code Availability}
Data generated and analyzed in this work can be reproduced from the code available at~\cite{maxcutqaoa}.

\section*{Acknowledgments}
BNB acknowledges valuable discussions with Eleanor Rieffel, Michael Ragone, and Manas Sajjan.
We thank the authors of~\cite{BLSAMS} for bringing their results to our attention and for pointing out typos in the first version of our manuscript.
BNB and MN were supported by the U.S. Department of Energy, Advanced Scientific Computing Research, under contract number DE-SC0025384. 

\bibliographystyle{plain}
\bibliography{referencesMain}

\appendix
\section{Proofs and Technical Details}\label{sec:VIII}

In this appendix, we provide detailed and rigorous proofs of the results stated in the main text. Our goal is to make the exposition self-contained and to supply all intermediate steps needed for verification.

\subsection{Proofs of Theorems \ref{DLAMainThm} and \ref{IsotypicDecomp}}\label{sec:proofIII.1}
Recall the decomposition $W=\widetilde{W} \oplus W_0$
from Eq.~\eqref{simplifiedDecompForGmax}, where $\widetilde{W}$ and $W_0$ are defined in \eqref{eq:Wtilde}, \eqref{eq:W0}.
The actions of $G_M$ and $H_P$ on $W_0$ are given by Eqs.\ \eqref{RestrictionsToTriv2} and \eqref{RestrictionsToTriv1}.
%

Note that $G_M =  \ket{\xi}\bra{\xi}$ acts trivially on $\widetilde{W}$ because $\widetilde{W} \subset \bigl(\mathbb{C}\ket{\xi}\bigr)^{\perp}$.
On the other hand, $H_P$ acts as multiplication by \( \lambda_j \) on its eigenspaces \( W_{\lambda_j} \) and on their subspaces \( \widetilde{W}_{\lambda_j} \subset W_{\lambda_j} \).
We obtain the following:

\begin{lem}\label{lem:Wtilde}
The subspace $\widetilde{W}$ is a direct sum of $1$-dimensional representations of the DLA \( \mathfrak{g}_{ \xi} \).
\end{lem}



Next, we present a technical result that serves as the main ingredient for describing dynamical Lie algebras arising from GM-QAOAs. 
Although our focus is on Grover-mixer QAOAs, the result may also be applicable to a broader class of QAOAs. 

\begin{prop}
Let \( \mathcal{D} = \mathrm{diag}(\lambda_1, \dots, \lambda_{d}) \) be a diagonal real matrix with 
\( \lambda_1 > \cdots > \lambda_{d} \), and \( A \in \mathbb{C}^{d \times d} \) be any complex matrix such that its entries satisfy
\begin{align*}
&a_{i1} a_{1j} + a_{id} a_{dj} \ne 0, \quad 2\le i,j \le d-1, \\
&a_{1k} \ne 0, \, a_{k1} \ne 0, \, a_{kd} \ne 0, \, a_{dk} \ne 0, \; 1\le k \le d.
\end{align*}
Consider the complex Lie algebra \( \mathfrak{p} := \langle \mathcal{D}, A \rangle_{\mathrm{Lie},\mathbb{C}} \) generated by \( \mathcal{D} \) and \( A \).
If both \( \mathcal{D} \) and \( A \) are traceless, then \( \mathfrak{p} = \mathfrak{sl}(d,\mathbb{C}) \); otherwise, \( \mathfrak{p} = \mathfrak{gl}(d,\mathbb{C}) \).
\label{KeyProp}
\end{prop}

\begin{proof}
Recall the commutation relations
\begin{equation}\label{EijEkl}
[E_{ij}, E_{kl}] = \delta_{jk} E_{il} - \delta_{li} E_{kj},
\end{equation}
where $E_{ij}$ is the matrix with $(i,j)$-entry $1$ and all other entries $0$.
Note also that
\begin{equation}\label{aijEij}
A = \sum_{i,j=1}^d a_{ij} E_{ij}.
\end{equation}
From \eqref{EijEkl}, we derive
$[\mathcal{D}, E_{ij}] = (\lambda_i - \lambda_j) E_{ij}$.
Hence, for any polynomial $g(t)$, we have 
\[
g(\operatorname{ad}_{\mathcal{D}}) E_{ij} = g(\lambda_i - \lambda_j) E_{ij}.
\]

Denote by $\Lambda := \{ \lambda_i - \lambda_j \mid 1\le i,j \le d\}$ the set of all pairwise differences of the eigenvalues of $\mathcal{D}$. For each pair of indices $1 \leq i \ne j \leq d$, let \( g_{ij}(t) \in \mathbb{C}[t] \) be the Lagrange interpolation polynomial such that
\[
g_{ij}(t) =
\begin{cases}
1 & \text{if }\, t = \lambda_i - \lambda_j, \\
0 & \text{if }\, t \in \Lambda \setminus \{\lambda_i - \lambda_j\}.
\end{cases}
\]

The largest difference \( \lambda_1 - \lambda_{d} \) occurs uniquely in $\Lambda$, due to the strict ordering of the entries in \( \mathcal{D} \). 
This implies that
\begin{align*}
g_{1d}(\operatorname{ad}_{\mathcal{D}})(A) &= a_{1d} E_{1d}, \\
g_{d1}(\operatorname{ad}_{\mathcal{D}})(A) &= a_{d1} E_{d1}.
\end{align*}
Because, by assumption, $a_{1d} \ne 0$ and $a_{d1} \ne 0$, we obtain that \(E_{1d},E_{d1}\in\mathfrak{p}\).
Then also $E_{11}-E_{dd} = [E_{1d},E_{d1}] \in\mathfrak{p}$.

Now consider the matrix $[E_{1d},A] \in\mathfrak{p}$, which by Eqs. \eqref{EijEkl}, \eqref{aijEij} is
\[
[E_{1d},A] = \sum_{j=1}^d a_{dj} E_{1j} - \sum_{i=1}^d a_{i1} E_{id}.
\]
If we apply $g_{1j}(\operatorname{ad}_{\mathcal{D}})$ to it, for a fixed $2 \le j \le d-1$, only the term with $E_{1j}$ contributes from the first sum, because $\lambda_1 - \lambda_j$ are pairwise distinct. The second sum may contribute at most one term corresponding to $E_{id}$ with $\lambda_1 - \lambda_j = \lambda_i - \lambda_d$.
We obtain that:
\[
g_{1j}(\operatorname{ad}_{\mathcal{D}})([E_{1d},A]) = a_{dj} E_{1j},
\]
if $\lambda_1 - \lambda_j \ne \lambda_i - \lambda_d$ for all $i$, and
\[
g_{1j}(\operatorname{ad}_{\mathcal{D}})([E_{1d},A]) = a_{dj} E_{1j} - a_{i1} E_{id},
\]
if $\lambda_1 - \lambda_j = \lambda_i - \lambda_d$.

In the former case, we conclude that $E_{1j}\in\mathfrak{p}$. In the latter case, we further consider
\begin{align*}
[E_{11}-E_{dd},A] &= \sum_{j=1}^d a_{1j} E_{1j} + \sum_{i=1}^d a_{id} E_{id} \\
&- \sum_{i=1}^d a_{i1} E_{i1} - \sum_{j=1}^d a_{dj} E_{dj}.
\end{align*}
Then, as above, we get
\[
g_{1j}(\operatorname{ad}_{\mathcal{D}})([E_{11}-E_{dd},A]) = a_{1j} E_{1j} + a_{id} E_{id}.
\]
Using the assumption $a_{i1} a_{1j} + a_{id} a_{dj} \ne 0$, we conclude again that $E_{1j}\in\mathfrak{p}$.

A similar argument, using
\[
[E_{d1},A] = \sum_{j=1}^d a_{1j} E_{dj} - \sum_{i=1}^d a_{id} E_{i1},
\]
proves that $E_{i1}\in\mathfrak{p}$.

Now the identity
\[
E_{ij} = [E_{i1}, E_{1j}], \qquad 2 \leq i \ne j \leq d,
\]
shows that all off-diagonal matrices lie in \( \mathfrak{p} \). 
Then from
\[
E_{ii} - E_{jj} = [E_{ij}, E_{ji}],
\]
we see that
\(
\mathfrak{p} \supseteq \mathfrak{sl}(d,\mathbb{C}). 
\)
Finally, if \( \mathcal{D} \) or \( A - \sum\limits_{i \ne j} a_{ij} E_{ij} \) is a diagonal matrix with nonzero trace, then \( \mathfrak{p} = \mathfrak{gl}(d,\mathbb{C}) \).
\end{proof}

\begin{cor}
Let 
\(
\mathcal{D} = \mathrm{diag}(\lambda_1, \dots, \lambda_{d})
\) 
be a real diagonal matrix with 
\(\lambda_1 > \cdots > \lambda_{d}\), and let 
\(A \in i\mathfrak{u}(d)\) be a Hermitian matrix with nonzero trace, 
satisfying the assumptions of Proposition~\ref{KeyProp}.
Then the Lie algebra
\(
\langle i\mathcal{D}, iA \rangle_{\mathrm{Lie}} = \mathfrak{u}(d)
\).

\label{KeyCor}
\end{cor}


We are now ready to present the proof of Theorem~\ref{DLAMainThm}, which characterizes the dynamical Lie algebras for QAOAs employing \( G_{M} \) as the mixer Hamiltonian.

\begin{proof}[Proof of Theorem~\ref{DLAMainThm}]
The Lie algebra \(\mathfrak{g}_{ \xi}\) preserves the decomposition
\[
    W = \widetilde{W} \oplus W_{0} 
\]
(see~\eqref{simplifiedDecompForGmax}), because both generators $G_M$ and $H_P$ do.
Consequently, \(\mathfrak{g}_{ \xi}\) embeds into the direct sum of the corresponding Lie algebras:
\[
\mathfrak{g}_{ \xi} \subseteq \mathfrak{u}(\widetilde{W}) \oplus \mathfrak{u}(W_{0}).
\]

The restrictions $G_{M,0}$ and \( H_{P,0} \) of $G_M$ and $H_P$ to \( W_{0} \), expressed in the orthonormal basis 
\(
    \{ \ket{\xi_1}, \dots, \ket{\xi_d} \},
\)
are given explicitly by Eqs.\ \eqref{RestrictionsToTriv2}, \eqref{RestrictionsToTriv1}. As
\[
c_{i}\overline{c}_1 c_{1}\overline{c}_j + c_{i}\overline{c}_d c_{d}\overline{c}_j 
    = (|c_1|^2 + |c_d|^2)c_i \overline{c}_j \ne 0,
\]
the matrices \( H_{P,0} \) and \( G_{M,0} \) satisfy the hypotheses of Proposition~\ref{KeyProp}. Moreover,
\[
\mathrm{Tr}[G_{M,0}] = \sum_{j=1}^{d} |c_j|^2 = \braket{\xi}{\xi} = 1 \ne 0.
\]
By Corollary \ref{KeyCor}, the subalgebra of $\mathfrak{g}_{ \xi}$ generated by these restrictions is equal to 
\(
    \mathfrak{u}(W_0) \cong \mathfrak{u}(d).
\)

If $d=2^n$, then $\widetilde{W} = \{0\}$ and we are left with $\mathfrak{g}_{\xi} = \mathfrak{u}(2^n)$. 
Suppose now that \(d<2^n\). By definition, all elements of $\mathfrak{g}_{\xi}$ have the form
$\alpha iH_P + \beta i G_M +$ terms involving commutators of $H_P$ and $G_M$ (for some $\alpha,\beta\in\mathbb{R}$).
Since \( G_{M} \) vanishes on \( \widetilde{W} \),
we see that the projection of $\mathfrak{g}_{\xi}$ to the summand $\mathfrak{u}(\widetilde{W})$ is equal to
$\mathbb{R} iH_P \pi_{\widetilde{W}}$.
\end{proof}

%


%
%

From the above proof, we also obtain the description of the center of $\mathfrak{g}_{\xi}$ given in Corollary \ref{cor:center}. Another consequence is that the representation of $\mathfrak{g}_{\xi}$ on $W_0$ is irreducible,
because $W_0$ is irreducible under the action of $\mathfrak{u}(W_0)$. Combined with Lemma \ref{lem:Wtilde}, 
this proves Theorem~\ref{IsotypicDecomp}.

Finally, again from the above proof of Theorem~\ref{DLAMainThm}, we have
\begin{equation}\label{eq:restate}
\mathfrak{g}_{\xi} = \mathfrak{u}(W_0) \pi_{W_0} \oplus \mathbb{R} iH_P \pi_{\widetilde{W}}.
\end{equation}
The associative closure of this gives Corollary \ref{cor:Agxi}.



\subsection{Proof of Theorem \ref{ContainmentThm}}

By assumption $H_M\ket\xi=\mu\ket\xi$, for some $\mu\in\mathbb R$, 
and all other eigenvectors of $H_M$ have eigenvalues $\ne\mu$.
Let $g(t)$ be the Lagrange interpolation polynomial such that
\[
g(t) =
\begin{cases}
1 & \text{if }\, t = \mu, \\
0 & \text{if }\, t \ne\mu \;\text{is an eigenvalue of $H_M$}.
\end{cases}
\]
Then $g(H_M)\ket\xi = \ket\xi$ and $g(H_M)=0$ on all other eigenvectors of $H_M$. Because eigenvectors with different eigenvalues are orthogonal, we see that $g(H_M)=0$ on the orthogonal complement of $\ket\xi$. Therefore, 
$g(H_M)=\ket\xi\bra\xi = G_M$,
completing the proof of Theorem \ref{ContainmentThm}.



In the case of the standard mixer \(B=\sum\limits_{j=1}^n X_j\), we also have the following result,
which provides an alternative proof that the Grover mixer $G_M$ is a polynomial of $B$ for $\ket{\xi}=\ket{+\cdots+}$
(see Eq.~\eqref{eq:grover}).

\begin{prop}\label{ContainmentAlgThm}
The elementary symmetric polynomials \eqref{eq:el} can be expressed as polynomials of $B$.
\end{prop}

\begin{proof}
%
It is well known that the
elementary symmetric polynomials $e_0$, $e_1$, $\dots,e_n$ in variables $x_1,\dots,x_n$ can be expressed inductively in terms of the power sums
\[
p_l := \sum_{j=1}^n x_j^l
\]
according to Newton's formula
\[
e_k = \frac1k \sum_{l=1}^k (-1)^{l-1} p_l e_{k-l}.
\]
Note that, because $X_j^2=I$, we have:
\[
p_l(X_1, \dots, X_n) = \begin{cases}
    nI, & \text{$l$ even,} \\
    B, & \text{$l$ odd.} \\
\end{cases}
\]
It follows by induction that all $e_k(X_1, \dots, X_n)$ are polynomials of $B$.
\end{proof}
%



\subsection{Proof of Theorem \ref{LargestCommutantThm}}


We begin by determining the structure of the commutant (also called centralizer) \( \mathcal{C}(H_P) \) of the problem Hamiltonian \( H_P \). 
%
%
The following characterization is immediate from the fact that \( H_P \) acts on each eigenspace \( W_{\lambda_j} \) (\( 1 \le j \le r \)) as $\lambda_j$ times the identity operator.

\begin{lem}
The centralizer of $H_P$ in \( \mathrm{End}(W) \) is given by
\begin{equation*}
\mathcal{C}(H_P) 
= \bigoplus_{j=1}^{r} \operatorname{End}(W_{\lambda_j}) \pi_{W_{\lambda_j}},
\end{equation*}
where $\pi_{W_{\lambda_j}}$ denotes the orthogonal projection onto the eigenspace $W_{\lambda_j}$.
\label{HpCentralizer}
\end{lem}

In a basis that respects the decomposition of \( W \) from Eq.~\eqref{LevelSetDecomp}, the centralizer \( \mathcal{C}(H_P)  \) consists of block-diagonal matrices of the form
\[
\left(
  \begin{array}{c|c|c|c}
    M_1 & 0 & \cdots & 0 \\
    \hline
    0 & M_2 & \ddots & 0 \\
    \hline
    \vdots & \ddots & \ddots & \vdots \\
    \hline
    0 & 0 & \cdots & M_{r}
  \end{array}
\right),
\]
where each block \( M_j \) corresponds to a distinct eigenvalue \( \lambda_j \) of \( H_P \).  
The number of blocks, \( r \), equals the number of unique eigenvalues of \( H_P \), and the size of each block \( M_j \) is \( n_j \), the multiplicity of \( \lambda_j \).



\begin{proof}[Proof of Theorem \ref{LargestCommutantThm}]
Note that, by definition,
\[
\mathcal{C}(\mathfrak{g}_\xi) = \mathcal{C}(G_M,H_P) = \mathcal{C}(G_M) \cap \mathcal{C}(H_P).
\]
Any operator $T\in\mathcal{C}(H_P)$, commuting with \( H_P \), must preserve its eigenspaces $W_{\lambda_j}$ (\( 1 \le j \le r \)), as in Lemma \ref{HpCentralizer}. Similarly, any $T\in\mathcal{C}(G_M)$ must preserve the eigenspaces \( \mathbb{C}\ket\xi \) and \( ( \mathbb{C}\ket\xi )^\perp \) of \( G_M \).

Hence, any $T\in\mathcal{C}(\mathfrak{g}_\xi)$ preserves the intersections 
\[ 
\widetilde{W}_{\lambda_j} = W_{\lambda_j} \cap (\mathbb{C}\ket\xi)^\perp,
\qquad 1 \leq j \leq d. 
\]
Next, recall that 
\[
\ket{\xi}= \sum\limits_{j=1}^{d} c_j\ket{\xi_j}, \qquad \ket{\xi_j} \in W_{\lambda_j},
\]
is the unique expression of $\ket\xi$ as a sum of normalized eigenvectors of $H_P$. If $T\ket\xi = \alpha\ket\xi$ for some $\alpha\in\mathbb C$, it follows that
$T\ket{\xi_j} = \alpha\ket{\xi_j}$ for all \( 1 \le j \le d \).
This implies that $T$ acts as $\alpha$ times the identity on the subspace $W_0$ spanned by $\ket{\xi_1},\dots,\ket{\xi_d}$.

Therefore, every $T\in\mathcal{C}(\mathfrak{g}_\xi)$ lies in the right-hand side of Eq.~\eqref{eq:Cgxi}.
Conversely, any operator $T$ in the the right-hand side of Eq.~\eqref{eq:Cgxi} will satisfy the above properties, and hence will commute with both $G_M$ and $H_P$.
This proves the equality \eqref{eq:Cgxi} and concludes the proof of Theorem \ref{LargestCommutantThm}.
\end{proof}

Note that the above proof can be applied more generally to a different mixer $H_M$ in place of $G_M$, and gives a direct proof of Corollary \ref{cor:maxcomm}.

\subsection{Proofs of Theorems \ref{thm:ExpVar} and \ref{thm:BPtheorem}}



Due to Theorems~2 and~3 of~\cite{RBSKMLC},
for a deep enough quantum circuit \eqref{qaoa-chain} (i.e., for sufficiently large $p$), the expectation and variance of the loss function \eqref{LossFnEqn2} over the space of parameters $(\boldsymbol{\beta}, \boldsymbol{\gamma})$ can be approximated by the corresponding expectation and variance taken over the dynamical Lie group $G=e^{\mathfrak{g}}$. The latter have been found in Theorem~1 of \cite{RBSKMLC}, which we recall here.

Let us decompose the dynamical Lie algebra $\mathfrak{g}$ as
\[
  \mathfrak{g} = \mathfrak{g}_1 \oplus \cdots \oplus \mathfrak{g}_k \oplus \mathfrak{z},
\]
into simple compact Lie algebras \( \mathfrak{g}_j \) together with a (possibly trivial) center \( \mathfrak{z} \) 
(this can be done in general, by Proposition A.1 in \cite{WKKB1}).

Then the variance of the loss function with respect to the Haar measure on the dynamical Lie group $G=e^{\mathfrak{g}}$ is:
\begin{equation}
   \operatorname{Var}_{G}
  \bigl[\ell_{\boldsymbol{\beta}, \boldsymbol{\gamma}} (\rho, O) \bigr] 
  = \sum_{j=1}^{k} \frac{\mathcal{P}_{\mathfrak{g}_j}(\rho) \, \mathcal{P}_{\mathfrak{g}_j}(O)}{\dim(\mathfrak{g}_j)},
\label{VarianceEqn}
\end{equation}
provided that $O\in i\mathfrak{g}$ (see \cite{RBSKMLC}).
In the above formula,
$\mathcal{P}_{\mathfrak{g}_j}(H) := \operatorname{Tr}\!\bigl[H_{\mathfrak{g}_j}^2\bigr]$
and \( H_\mathfrak{g_j} \) denotes the orthogonal projection of an operator \(H\) onto the subalgebra \( \mathfrak{g}_j \subseteq \mathrm{End}(W)\).  

For the expectation, we have (again by \cite{RBSKMLC}, Theorem~1):
\begin{equation}
\mathbb{E}_{G}\bigl[\ell_{\boldsymbol{\beta}, \boldsymbol{\gamma}} (\rho, O) \bigr]
= \operatorname{Tr}\bigl[\rho_{\mathfrak{z}}O_\mathfrak{z} \bigr].
\label{ExpEqn}
\end{equation}

\begin{proof}[Proof of Theorem \ref{thm:ExpVar}]
We apply formulas \eqref{VarianceEqn} and \eqref{ExpEqn} to the case of GM-QAOA. Then the  the dynamical Lie algebra $\mathfrak{g}_{\xi}$ admits the decomposition
\[
    \mathfrak{g}_{ \xi} \cong \mathfrak{su}(d) \oplus \mathfrak{z},
\]
where the center \( \mathfrak{z} \) is isomorphic to  \( \mathfrak{u}(1)^{\oplus2} \)  (see Theorem~\ref{DLAMainThm}).
 Since the contribution of the center to the variance vanishes, we get
\[
\operatorname{Var}_{G_\xi}\bigl[\ell_{\boldsymbol{\beta}, \boldsymbol{\gamma}} (\rho, O) \bigr] 
= \frac{\mathcal{P}_{\mathfrak{su}(d)}(\rho) \, \mathcal{P}_{\mathfrak{su}(d)}(O)}{d^2-1}.
\]
We will compute this expression for the density matrix $\rho=\ket{\xi}\bra{\xi}$ corresponding to the initial state $\ket\xi$, and the observable $O=H_P$.

Recall from Eq.~\eqref{eq:restate} that the copy of $\mathfrak{su}(d)$ inside $\mathfrak{g}_{\xi}$ is realized as $\mathfrak{su}(W_{0})$.
Let \( \rho_{\mathfrak{u}(W_{0})} \) denote the restriction of $\rho$
to the Lie algebra \( \mathfrak{u}(W_{0}) \). Then
\[
\rho_{\mathfrak{u}(W_{0})} = \ket{v}\bra{v}, \quad \text{where } \ket{v} = (c_1, c_2, \dots, c_{d})^T,
\]
with \( \braket{v}{v} =\braket{\xi}{\xi} = 1 \). In the basis \( \{ \ket{\xi_1}, \dots, \ket{\xi_d} \} \), this coincides with  \( G_{M, 0} \), the restriction of \(G_{M} \) to the subspace \(W_{0}\) (see Eq.~\eqref{RestrictionsToTriv2}).

Since \( \rho_{\mathfrak{u}(W_{0})} \) is a rank-one projection, we have:
\[
\rho_{\mathfrak{u}(W_{0})}^2 = 
\rho_{\mathfrak{u}(W_{0})}, \qquad
\operatorname{Tr}\bigl[\rho_{\mathfrak{u}(W_{0})}\bigr] = 1.
\]
To project further to the subalgebra of traceless matrices \( \mathfrak{su}(W_{0}) \subset \mathfrak{u}(W_{0}) \), we subtract the scalar part:
\[
\rho_{\mathfrak{su}(W_{0})} = \rho_{\mathfrak{u}(W_{0})} - \frac{1}{d} I_{d}.
\]

We now compute the squared norm of the traceless component:
\begin{align*}
\mathcal{P}&{}_{\mathfrak{su}(W_{0})}(\rho) = \operatorname{Tr} \!\left[ \left( \rho_{\mathfrak{u}(W_{0})} - \frac{1}{d} I_{d} \right)^2 \right] \\
&=\operatorname{Tr} \!\left[ \rho_{\mathfrak{u}(W_{0})}^2 - \frac{2}{d} \rho_{\mathfrak{u}(W_{0})} + \frac{1}{d^2} I_{d} \right] \\
&= \operatorname{Tr}[\rho_{\mathfrak{u}(W_{0})}^2] - \frac{2}{d} \operatorname{Tr}[\rho_{\mathfrak{u}(W_{0})}] + \frac{1}{d^2} \operatorname{Tr}[I_{d}] \\
&=1 - \frac{2}{d} + \frac{1}{d^2} d \\
&= 1 - \frac{1}{d}\,.
\end{align*}


We now consider the projection of the observable \( H_P \) to the Lie algebra \( \mathfrak{u}(W_{0}) \). According to Eq.~\eqref{RestrictionsToTriv1}, this projection is given in the orthonormal basis \( \{ \ket{\xi_1}, \dots, \ket{\xi_d} \} \) by the diagonal matrix:
\[
    H_{P, 0} = \operatorname{diag} \left( \lambda_1, \lambda_2, \dots, \lambda_{d} \right).
\]
Let us define the following trace quantities:
\[
\begin{aligned}
    L_1 &:= \operatorname{Tr}[H_{P, 0}] = \sum\limits_{j=1}^{d} \lambda_j, \\
    L_2 &:= \operatorname{Tr}[H_{P, 0}^2] = \sum_{j=1}^{d} \lambda_j^2.
\end{aligned}
\]
Then the projection of \( H_P \) to the traceless Lie algebra \( \mathfrak{su}(W_{0}) \) is given by:
\[
H_{P, 0} - \frac{L_1}{d} I_{d}.
\]
From here, we find
\[
\begin{aligned}
    \mathcal{P}&{}_{\mathfrak{su}(W_{0})}(H_P) 
    = \operatorname{Tr}\!\left[\left( H_{P, 0} - \frac{L_1}{d} I_{d} \right)^2\right] 
    \\
    &=\operatorname{Tr}[H_{P, 0}^2] - \frac{2L_1}{d} \operatorname{Tr}[H_{P, 0}] + \frac{L_1^2}{d^2} \operatorname{Tr}[I_{d}] \\
    &= L_2 - \frac{2L_1^2}{d} + \frac{L_1^2}{d} \\
    &= L_2 - \frac{L_1^2}{d}.
\end{aligned}
\]

We further observe that:
\[
    L_1^2 
    = \sum_{j=1}^{d} \lambda_j^2 + 2 \sum_{1 \leq i < j \leq d} \lambda_i \lambda_j,
\]
and we can express the projection also as:
\[
\begin{aligned}
    \mathcal{P}&{}_{\mathfrak{su}(W_{0})}(H_P) \\
    &=\sum_{j=1}^{d} \lambda_j^2 - \frac{1}{d} \left( \sum_{j=1}^{d}\lambda_j^2 + 2 \sum_{1 \leq i < j \leq d} \lambda_i \lambda_j \right)\\  
    &= \frac{1}{d}\sum\limits_{1 \leq i < j \leq d} \left( \lambda_i-\lambda_j \right)^2.
\end{aligned}
\]
%
%
Applying the formula for $\operatorname{Var}(\zeta_{\Lambda})$ from Eq.~\eqref{eq:variance}, we obtain the expression for $\operatorname{Var}_{G_\xi}\bigl[\ell_{\boldsymbol{\beta}, \boldsymbol{\gamma}} (\rho, H_P) \bigr]$ stated in Eq.~\eqref{varianceFormula}.

Next, we compute the expectation of the loss function. 
Recall that the center \( \mathfrak{z} \) of \( \mathfrak{g}_{\xi} \) is $2$-dimensional, as described in Corollary~\ref{cor:center}. In the basis specified there, the orthogonal projections of \( \rho \) and \( H_P \) onto \( \mathfrak{z} \) are given by the following \( 2 \times 2 \) matrices:

\[
\rho_\mathfrak{z} = 
\begin{pmatrix}
0 & 0 \\
0 & \frac{1}{d}
\end{pmatrix}, 
\qquad
(H_P)_\mathfrak{z} = 
\begin{pmatrix}
1 & 0 \\
0 & \frac{L_1}{d}
\end{pmatrix}.
\]
The trace of their product  is then
\[
\operatorname{Tr}\bigl[
\rho_{\mathfrak{z} }(H_P)_\mathfrak{z}
\bigr]= \frac{1}{d^2} \sum\limits_{j=1}^{d} \lambda_j.
\]
This proves Eq.~\eqref{ExpectationOfLossFunction} and completes the proof of Theorem \ref{thm:ExpVar}.
\end{proof}

Finally, we present the proof of Theorem \ref{thm:BPtheorem}.

\begin{proof}[Proof of Theorem \ref{thm:BPtheorem}]
We first note that for any collection of distinct integers \(\lambda_1, \dots, \lambda_d\), the variance \(\operatorname{Var}(\zeta_{\Lambda})\), given by Eq.~\eqref{eq:variance}, admits the following lower bound:  
\[
\operatorname{Var}(\zeta_{\Lambda}) \ge \frac{(d-1)(d+1)}{12} \,.
\]
Indeed, if we impose an ordering \(\lambda_1 > \cdots > \lambda_{d}\), then $\lambda_i - \lambda_j \ge j-i$ for $i<j$, and
we have
\begin{align*}
\sum_{1 \le i < j \le d} (\lambda_i - \lambda_j)^2 
&\ge \sum_{1 \le i < j \le d} (i - j)^2 \\
&= \frac{d^2 (d-1)(d+1)}{12},
\end{align*}
where equality is achieved when the integers \(\lambda_i\) are consecuitive.

Applying formula~\eqref{varianceFormula}, we obtain
\[
\operatorname{Var}_{\boldsymbol{\beta}, \boldsymbol{\gamma}}\bigl[\ell_{\boldsymbol{\beta}, \boldsymbol{\gamma}} (\rho, H_P)\bigr] \ge \frac{d-1}{12} \,,
\]
and consequently,
\[
\operatorname{Var}_{\boldsymbol{\beta}, \boldsymbol{\gamma}}\bigl[\ell_{\boldsymbol{\beta}, \boldsymbol{\gamma}} (\rho, \widehat{H}_P)\bigr] \ge \frac{d-1}{12 \, \lambda_{\max}^2} \,.
\]
Since \(\ket{\xi}\) is not an eigenvector of \(H_P\), we have $d \ge 2$. 

It remains to bound $|\lambda_{\max}|$ from above. 
As each term \( f_j \) contributes at most \( M \) to the total value of $F(x)$ for $x\in\mathbb{B}^n$,
we get
\[
|\lambda_{\max}| \le MT \le M \binom{n}{s} \le \frac{M n^s}{s!} \,.
\]
This proves Eq.~\eqref{VarLowerBound} and completes the proof of Theorem \ref{thm:BPtheorem}.
\end{proof}


\end{document}